\documentclass[11pt]{article}

\usepackage{fullpage}
\usepackage{authblk}
\usepackage[T1]{fontenc}
\usepackage[utf8]{inputenc}
\usepackage{ifthen}
\usepackage{hyperref}
\usepackage{amsfonts,  amsmath, amsthm} 
\usepackage[capitalise,nameinlink]{cleveref}
\usepackage{dsfont}
\usepackage{todonotes}
\usepackage{bbm}
\usepackage{mathtools}
\usepackage{verbatim}
\usepackage{multicol}
\usepackage{graphicx}
\usepackage{algorithm}
\usepackage[noend]{algpseudocode}
\floatname{algorithm}{Mechanism}


\usepackage{tikz}
\usepackage{pgfplots}

\newcommand{%
	\scalebox{}{\input{}}  
}[2]{%
	\scalebox{#1}{\input{#2}}  
}

\definecolor{DarkGreen}{rgb}{0.1,0.5,0.1}
\definecolor{DarkRed}{rgb}{0.5,0.1,0.1}
\definecolor{DarkBlue}{rgb}{0.1,0.1,0.5}

\definecolor{links}{RGB}{11, 85, 255}
\definecolor{cites}{RGB}{0, 200, 0}
\definecolor{urls}{RGB}{255, 116, 0}

\def\showauthnotes{1}
\ifthenelse{\showauthnotes=1}
{
	\newcommand{\authnote}[2]{{ \footnotesize \bf{[#1: #2]~}}}
}
{ \newcommand{\authnote}[2]{} }

\newcommand{\ignore}[1]{}


\usepackage[mathscr]{euscript}
 \let\mathscr\relax


\DeclareMathOperator*{\Expectation}{\mathbb{E}}

\newcommand{\prob}[1]{\Pr\left[#1\right]}

\def\epsilon{\varepsilon}

\renewcommand{\hat}{\widehat}

\DeclareMathOperator*{\argmin}{\mathrm{argmin}}
\DeclareMathOperator*{\argmax}{\mathrm{argmax}}

\newtheorem{theorem}{Theorem}

\newtheorem{example}{Example}
\newtheorem{remark}{Remark}
\newtheorem{proposition}{Proposition}
\newtheorem{definition}{Definition}
\newtheorem{observation}{Observation}

\newtheorem{corollary}{Corollary}
\newtheorem{lemma}{Lemma}

\newcommand{\vecc}[1]{\ensuremath{\mathbf{#1}}}

\newcommand{\expect}[1]{\mathbb{E}\left[#1\right]}


\title{Tiered Mechanisms for Blockchain Transaction Fees}

\author[1]{Aggelos Kiayias}
\affil[1]{University of Edinburgh and IOG}
\affil[]{\textsf{\href{mailto:akiayias@inf.ed.ac.uk}{akiayias}@inf.ed.ac.uk}\medskip}
\author[2]{Elias Koutsoupias}
\affil[2]{University of Oxford}
\affil[]{\textsf{\href{mailto:elias@cs.ox.ac.uk}{elias}@cs.ox.ac.uk}\medskip}
\author[3]{Philip Lazos}
\author[3]{Giorgos Panagiotakos}
\affil[3]{IOG}
\affil[]{\textsf{\{\href{mailto:philip.lazos@iohk.io}{philip.lazos},
		\href{mailto:giorgos.panagiotakos@iohk.io}{giorgos.panagiotakos}\}@iohk.io}
	\medskip
	}

\begin{document}

\maketitle

\begin{abstract}
	Blockchain systems come with the promise of being inclusive for a variety of 
	decentralized applications (DApps) that can serve different purposes and 
	have different urgency requirements. Despite this, the transaction fee 
	mechanisms currently deployed in popular platforms as well as previous 
	modeling attempts for the associated mechanism design problem focus on an 
	approach that favors increasing prices in favor of those clients who value 
	immediate service during periods of congestion. 
	To address this issue, we introduce a model that captures the {\em traffic 
	diversity}  of blockchain systems and a {\em tiered pricing} mechanism that 
	is capable of implementing more inclusive transaction policies. 
	In this model, we demonstrate formally that EIP-1559, the transaction fee 
	mechanism currently used in Ethereum, is not inclusive and demonstrate 
	experimentally that its prices surge horizontally during periods of 
	congestion. 
	On the other hand, we prove formally that our mechanism achieves stable 
	prices in expectation and we provide experimental results that establish that 
	prices for transactions can be kept low for low urgency transactions, 
	resulting 
	in a diverse set of transaction types entering the blockchain. At the same 
	time, 
	perhaps surprisingly,  our mechanism does not necessarily sacrifice revenue 
	since the lowering of the prices for low urgency transactions can be covered 
	from high urgency ones due to the price discrimination ability of the 
	mechanism. 
\end{abstract}

\section{Introduction}

Blockchain systems are supposed to process a variety of transactions that come from different applications including DeFi systems, privacy mixers, atomic swaps between users who interact in online marketplaces, and many others. Transactions from different types of applications may have substantially different urgency requirements and the user's value and welfare may dramatically be affected by the time it takes for the transaction to be included in the blockchain. 
For example, consider the contrast between DeFi and settlement transactions of a layer 2 payment system. The first type naturally has a high urgency and the user's value may decrease substantially after a short time period, while the urgency of the second type may be relatively low and its users' value can be unchanged for a substantially longer period of time. Given this, it seems natural that an ideal transaction fee mechanism should accommodate the different urgency requirements of a large variety of applications. 

Nevertheless the dominant approach in blockchain systems (cf. Ethereum~\cite{Ethereum} and Bitcoin~\cite{Nakamoto2008}) as well as in modelling transaction fee mechanisms \cite{DBLP:conf/sigecom/Roughgarden21,DBLP:journals/iacr/ChungS21}
casts the process of accessing blockchain space as a  first price auction where users competing for blockchain space try to outbid each other only at the highest level urgency. 
This results in a situation where at a time of network congestion, the applications with the highest urgency and value will drive the transaction prices high --- even to a degree that low urgency applications may become too expensive to use\footnote{Both bitcoin and Ethereum average transaction charts exhibit periods of spikes that correlate with specific application use, cf. \url{https://ycharts.com/indicators/bitcoin_average_transaction_fee}
and 
\url{https://ycharts.com/indicators/ethereum_average_transaction_fee}.}. 
This state of affairs comes in sharp contrast with the promise of these systems as egalitarian workspaces where different developers and users can deploy applications and use them concurrently. 

While taking into account the urgency requirements of each transaction has many potential advantages such as better efficiency, increasing user satisfaction, and achieving higher levels of fairness by being more inclusive, it also has an important downside. A system where the users would specify their urgency requirements in detail would be cumbersome, with unnecessary high communication cost, and complicated incentives. 

Motivated by this, we put forth a simple system that balances the tradeoff between the two extremes of no urgency signaling and total urgency specifications: a \emph{tiered transaction scheme}. In such a scheme, the real estate of blockchain is divided into finitely many tiers, whose number is dynamically determined, and each one has its own service delay and price which is also dynamically determined.  Users select the tier in which they submit their transaction and pay the posted price. A rough analog of a tiered system is a multi-lane highway, in which different lanes have different speeds and gas consumption rates and users select the lane that accommodates their needs depending on their urgency and how much gas they wish to spend. As we will demonstrate  such a simple scheme can achieve substantially better congestion and welfare than a system in which all slow and fast drivers are mixed in all lanes.

Pricing systems that offer different levels of service are very common. A typical such pricing setting is the ``Fedex problem'' \cite{DBLP:conf/sigecom/FiatGKK16}, in which a few service levels are offered to the clients/users; here tiers correspond to service levels such as same-day, overnight, and 2-day delivery. In the simplest formulation of the problem, each user has a deadline $d$ and a value $v$, selects one of the offered service levels, pays the corresponding price $p$, and gets value $v$ if its package arrives by the deadline $d$; otherwise the user gets value 0. A more general formulation --- one that we consider in this work --- is that the value of a user is a decreasing function $v(d)$. Still, there are significant differences between the Fedex problem and a tiered transaction fee mechanism; for example, in a blockchain transaction fee scheme the user does not have to pay anything if the transaction is not included in the blockchain. The most important differences however is that our transaction fee mechanism should adapt dynamically to demand by changing prices and the size and number of tiers and that there is congestion due to limited capacity.

\noindent 
{\bf Our Contributions.}  In this work, we present a novel model for
capturing client demand for blockchain processing that utilizes
monotonically decreasing functions that map delays to values. Given
that the blockchain consensus protocol has to serialize transactions,
during periods of high demand there will be inevitable delays for
transactions, which accordingly reduces their utility. This gives rise
to a wide class of mechanisms, whose study in initiated here. One of
the simplest examples in this setting is the Ethereum EIP-1559
mechanism (without tips), a posted-price mechanism that focuses on
servicing all transactions immediately and disregards the dependency
of the potential decrease in value due to delays.
 
Our primary objective in settings where utility decreases in time,
which we term \emph{traffic diversity}, aims to capture the
requirement of having a policy that balances the service among
multiple types of users when the system is experiencing
congestion. The advantage of expressing this as an objective is that
we can consider transaction fee mechanisms that can conform to a
publicly shared and agreed {\em diversity policy}. A simple example of
a diversity policy would be to devote, say, 20\% of transaction space
in a block on average to low urgency transactions. 

Having a formal definition of diversity, allows us to analyze EIP-1559 and demonstrate that the set of policies it can implement is quite restricted. 
This comes at no surprise since the mechanism under
load favors a specific type of transaction - those that are willing to
pay higher transaction fees, independently of their urgency.

 \ignore{
Having a formal definition of diversity policies allows us to explore
an important class of policies that we call {\em inclusive policies.}
In an inclusive policy, transactions of sufficiently different urgency
and price profiles are recognized and acknowledged for service. A
member of the class will be parameterized by how many such distinct
types of transactions are included. We analyze EIP-1559 from an
inclusive policy perspective and we demonstrate that it does not
exhibit diversity. This comes at no surprise since the mechanism under
load favors a specific type of transaction - those that are willing to
pay higher transaction fees, independently of their urgency.
}
 
To address a rich variety of  policies, we consider
\emph{tiered pricing mechanisms}. The key feature of this class of
mechanisms is the introduction of delays in transaction
processing. Specifically each block of transactions is assigned to a
specific tier at the choice of the mechanism and may impose a
processing delay. Transactions settled in such a block will remain in
limbo until the right moment after which they will be processed and
incorporated into the ledger. If any of these transactions is not
valid at that time it will be not be included in the ledger. Tiers are
created dynamically based on demand and the price to access each tier
is determined dynamically based on the tier’s capacity.
 
We analyze the tiered mechanism from a traffic diversity perspective
and we prove that it can implement a rich variety of policies,
under
the standard assumption of ``no points masses’’ distributions and the
reasonable assumption that the expected demand is continuous. An
important and desirable property that we establish is that --- for
fixed tier sizes --- the mechanism can always find stable and unique
delays and prices that implement the desired policy.
 
We couple the theoretical conclusion with a set of experiments that
aim to examine how the tiered mechanism converges to a set of prices
and how it behaves when demand fluctuates. In particular, we study the
important setting when demand surges exceeding network capacity for a
certain period of time. We compare our mechanism’s behavior against
the EIP-1559 protocol. Our experiments demonstrate that prices for
transactions whose value does not significantly deteriorate in time
can be kept low, resulting in a diverse set of transaction types
entering the blockchain with suitable price discrimination based on
urgency. In contrast, prices in EIP-1559 surge horizontally throughout
the network congestion period. We also examine the impact on revenue
that our system has and we demonstrate that tiered pricing does not
need to sacrifice revenue in order to accommodate diversity: due to
its ability to adjust prices per tier, higher transaction fees payed
in the high urgency ``rich’’ category compensate for the lower prices
in the slower tiers.

\ignore{ 
We first introduce a model that captures client demand for blockchain 
processing. A key feature of our model is that transactions are associated with 
a monotonically decreasing function that maps delays to  value. The blockchain 
consensus protocol serializes the transactions with inevitable delays during 
periods of high demand. A serialization into blocks and the resulting delays give 
rise to the utility that clients extract from the system. In this setting we define 
truthful mechanisms and observe that Ethereum EIP-1559 (without tips) can 
be seen as a posted-price mechanism for immediate service. 

We next turn to the formalization of our primary objective which we call {\em traffic diversity}: it aims to capture the requirement of having a diversity policy when the system is experiencing congestion. The advantage of expressing this as an objective is that we can consider transaction fee mechanisms that should conform to a publicly shared and agreed {\em diversity policy}. A simple example of a diversity policy would be to devote, say, 20\% of transaction space in a block on average to low urgency transactions. A  policy will be a function of transaction load which allows decisions about diversity to be made according to runtime system conditions. 

Armed with the diversity policy formalization we set out to explore an important class, of policies we call {\em inclusive policies.} In an inclusive policy transactions of sufficiently different urgency and price profiles are recognized and acknowledged for service. A member of the class will be parameterized by how many such distinct types of transactions are featured. 

We next analyze EIP-1559 from an inclusive policy perspective and we demonstrate that it does not exhibit diversity. This comes as no surprise since the mechanism under load favors a specific type of transaction - high urgency ones that originate from issuers willing to pay higher transaction fees.

We introduce tiered pricing mechanisms next. The key feature of this class of mechanisms is the introduction of delays in transaction processing in order to accommodate inclusive policies.  Specifically each block of transactions is assigned to a specific tier at the choice of the mechanism and may impose a processing delay. Transactions settled in such a block will remain in limbo until the right moment after which they will be processed and incorporated into the ledger. If any of these transactions is not valid at that time it will be dropped from the ledger. Tiers are created dynamically based on demand and the price to access each tier is determined dynamically based on the tier’s capacity.

We analyze the tiered mechanism from a traffic diversity perspective and we prove that it can implement inclusive  policies. To establish this formally we posit that  the expected demand is continuous and satisfies a standard ``no points masses’’ type of assumption. We prove that for fixed tier sizes and delays the mechanism  can always find stable and unique prices that implement the desired inclusive policy.

Finally we conduct a set of experiments to demonstrate how the tiered mechanism converges to a set of prices and how it behaves when demand fluctuates. In particular we study the important setting when demand surges exceeding network capacity for a certain period of time. We compare our mechanism’s behavior against EIP-1559. Our experiments demonstrate that prices for transactions that are ok to be delayed can be kept low resulting in a diverse set of transaction types entering the blockchain with suitable price discrimination based on urgency; in contrast, prices in EIP-1559 surge horizontally throughout the network congestion period. We also examine the impact on revenue that our system has and we demonstrate that tiered pricing does not need to sacrifice revenue in order to accommodate diversity: due to its ability to adjust prices  per tier, higher transaction fees payed in the high urgency ``rich’’ category compensate for the lower prices in the slower tiers.
}

\subsection{Related Work}
The most closely related work to our paper is the original EIP-1559 mechanism 
proposed in \cite{buterinethereum} and 
\cite{DBLP:conf/sigecom/Roughgarden21} which provided a framework for its 
analysis and proposes alternatives. Significant later results include 
\cite{chung2023foundations} which show that in general, no 
transaction fee mechanism can satisfy user incentive compatibility, miner 
incentive compatibility and be off-chain agreement proof all at once. They 
propose a change to the model incorporating potential future losses from 
misreporting and design a mechanism in this framework that can do all three, 
at the expense of using more block space for the same throughput. 
\cite{ferreira2021dynamic} propose another variant of EIP-1559, which is 
proven to be more `stable', by showing 
that the prices exhibit a martingale property for unchanging stochastic 
demand. The stability of EIP-1559 has been studied through the lens of 
dynamical systems in a string of papers 
\cite{reijsbergen2021transaction, leonardos2021dynamical, 
	leonardos2022optimality}, which show that even though the prices can show 
chaotic behaviour, the block sizes on average are indeed very close to the 
target. This can also be viewed as a theoretical justification of our model, which 
directly assumes this situation is true. There have been many other proposals 
for updating transaction fee mechanisms, aimed both at the EIP-1559 and 
Bitcoin's first price auction. One of the first significant attempts to revamp 
Bitcoin's mechanism is \cite{lavi2022redesigning}. 
They try to mitigate both the low revenue generated when congestion is low and 
truthfulness at once and propose two mechanisms, the Monopolistic Price, and 
Random Sampling Optimal Price (RSOP), the latter of which was initially 
proposed for digital goods in \cite{goldberg2006competitive}. The Monopolistic 
Price mechanism is not always truthful from the users' side, but as they show 
experimentally and \cite{yao2018incentive} proves rigorously in it is 
approximately incentive compatible when demand is high. Finally, 
\cite{buterinforward,basu2019towards} propose variants of `pay-forward' 
mechanisms, where fees do not directly go to the miner responsible for the next 
block but are distributed to others as well. The Ethereum fee burning rule could 
be seen as a culmination of this line of reasoning.

\ignore{
XXXXXXX OLD TEXT XXXXX

We then introduce tiered transaction fee mechanisms. Tiered mechanisms can have an upper bound on the number of tiers that can serve (we denote this by $k$) and provide for a different

We will examine the welfare generated by our $k$-tiered system
vs. Ethereum, which can be seen as
the special case for $k = 1$. In reality,  blocks and 
transaction production is a stochastic process, happening over time. We break 
this analysis down in separate modules. Specifically:
\begin{itemize}
	\item We start from our myopic users. Assuming the prices calculated by our 
	system satisfy certain conditions, we show that they are truthful. To explain 
	what these `conditions' are, we need to return to Ethereum. The prices under 
	EIP-1559, ideally, would hover around some value where the demand meets 
	the supply, with the blocks \emph{half full}. This implies unless there is a 
	massive spike in demand, this game is easy to play: whichever user has 
	positive utility for this price, sends her transaction. However, this is only try 
	for this specific price! For example, If the price is too low, the users (even 
	myopic ones) might overbid (which is allowed under EIP-1559, in our case 
	they might select a more expensive tier).
	
	\item Assuming miners are indeed truthful, we study the `average' prices 
	and delays this system would settle on. Then, we show that they have a 
	robustness properties (e.g., half filled blocks) such that they myopic 
	truthfulness assumption remains plausible.
	\begin{itemize}
		\item Our mechanism would be able to change the delay, price and possibly 
		size of each tier.
		\item There might be many stable, robust configurations. Ethereum is one 
		of them: just one tier with one price and no delay.
		\item We need to compare the states our mechanism converges to against 
		Ethereum and the optimal for different objectives, such as welfare, revenue 
		and traffic representation (i.e., the ability of on-chain traffic to mirror the 
		demand).
	\end{itemize}
	
	\item Finally, we could study the behaviour of the SPO's (under the 2/3 
	honest majority assumption) for our given transaction fee mechanism and 
	reward sharing scheme, to show that even though SPO's could deviate, this is 
	not profitable when done unilaterally. In other words, following our protocol 
	is a Nash equilibrium.
	
\end{itemize}

The focus of this particular document is the second bullet point. We consider an 
abstraction and model the \emph{steady state}\footnote{Perhaps steady state 
	is misleading, maybe average or expected state is better.} of this system.  We 
clarify our reasoning by taking EIP-1559 as an example.

\begin{example}
	Suppose that we study ethereum, assuming the blocksize is $n$. By $n$ we  
	refer to the \emph{target} capacity of Ethereum, which allows for some slack 
	if extra transaction arrive. The actual block size would be $2\cdot n$.  
	Additionally, suppose that exactly $2 \cdot n$ transaction are produced each 
	turn, with an associated value drawn independently and uniformly from $[0, 
	1]$. Realistically, the actual minimum bid calculated on-chain according to 
	$EIP-1559$ would hover around the value $0.5$. In our model, that price 
	would be \emph{exactly} 0.5. This is because the expected number of 
	transactions above this value is exactly $n$, which is target block size.
\end{example}
}
\section{Preliminaries}
We are interested in the steady state behavior of a blockchain with a 
transaction mechanism that has the additional power to 
add delays to transactions in addition to fees. Moreover, we focus on the case 
with sustained demand \emph{exceeding} throughput.

We assume that the transaction mechanism is strategyproof for myopic agents (i.e., agents that do not take into account what could happed at \emph{later} blocks and maximise their utility greedily). Therefore, we can conceptualize the \emph{steady state} interaction of users with the 
blockchain as follows: at some random point during the day  the user needs to submit a transaction. If the user truthfully report her needs, what kind of service will 
be provided to her?

\subsection{Steady State Mechanisms for Blockchains}
The transactions generated by the users are characterized by having a value 
that is dependent on the \emph{delay}. In particular, each transaction $t_i$ 
is associated with a function $v_i : \mathbb{R}_{\ge 1} \rightarrow  
\mathbb{R}_{\ge 0}$, mapping delays to values.  Notice that the minimum 
delay is 1, for a transaction included in the very next block. Naturally, we also have that if $d < d'$ 
then $v_i(d) > v_i(d')$, meaning that users prefer faster service. For convenience, we will refer to the value (or utility) of a transaction and the user who issued it interchangeably.

We use $B$ to denote the throughput of the blockchain, measured in transactions per block. 
The value functions $v_i$ of transactions trying to enter the blockchain are independent, 
identically distributed and follow the distribution 
$F$, arriving at a rate of $n$ per block.
To model high, sustained demand we assume that the average arrival rate of 
transactions is $n > B$; w.l.o.g., $n$ is assumed to be a function of $B$.
The blockchain implements a transaction 
mechanism $\mathcal{M}(v_i \;|\; n,F) = (d, p)$, mapping transactions to 
price and 
delay pairs; we will omit $n$ when it is clear from the context.\footnote{Notice that the mechanism $\mathcal{M}$ also takes into account $F$. In practice, this is not a direct input but a result of some dynamic price update mechanism that only observes some of the demand.}
Given a transaction with value $v$, its utility is $u(v \;|\; \mathcal{M}, n, F) = v(d) - p$. The \emph{demand} is defined as
$
D = \{ v \in \text{supp}(F) \;|\; v(d) - p \ge 0 \text{ for } (d, p) = 
\mathcal{M}(v \;|\; n, F)\}.
$
Transactions with non-negative utility should arrive at a rate lower than the 
throughput, thus we need that $n \cdot \prob{D} \le B$. Given that $n$ and 
$B$ are constant, we will omit them when they are clear from the context.
\begin{remark}
So far, $\mathcal{M}$ has been presented a \emph{direct} revelation mechanism, receiving as input the value function $v$ and providing some quality of service. This is better for the sake of comparing mechanisms against each other under a unified notation. In practice, the mechanisms used (and the the main result of this work) offer the same menu of choices to every user (at a given time), letting them chose without revealing their $v$. By the revelation principle, of course these mechanisms could be converted to our setting where the optimal choice of service among that menu would be done internally by $\mathcal{M}$, given $v$ as a input.
\end{remark}

\begin{remark}
	Notice that $\mathcal{M}$ is agnostic about the identity of each user since in 
	practice anonymity and fairness are ranked quite high among 
	desirable blockchain properties.
\end{remark}

We illustrate one application of the previous definitions using Ethereum as an example.
\begin{example}[Ethereum under EIP-1559]
	For myopic users, EIP-1559 (Mechanism~\ref{alg:EIP}) appears to be a simple posted price mechanism to get immediate service.

\begin{algorithm}[h] 
	\caption {The price update rule of EIP-1559 is parameterized by the 
		target load $targetLoad$. It takes as input the price of the previous block 
		$p$ and its fullness level $f$, and outputs the new price.}
\label{alg:EIP}
\begin{algorithmic}[1]			
	\Function{Update-EIP1559}{$p,f$}
	\State return $p \cdot (1+\frac{1}{8} \frac{f - targetLoad}{targetLoad})$
	\EndFunction
\end{algorithmic}		
\end{algorithm}

	To satisfy the demand constraints, the price 
	provided by the mechanism has to be $p_0$ such that
	$$
	B = n\cdot \prob{D}  = n\cdot \prob{v(0) \ge p_0}.
	$$
	Then, the mechanism would be $\mathcal{M}(v \;|\;  F) = (0, p_0)$ for all 
	$v \in F$. Theoretically, this pricing mechanism does not have convergent or 
	particularly stable prices even with unchanging stochastic demand (e.g., 
	\cite{ferreira2021dynamic} show that the price is not a martingale, but they 
	propose a modified mechanism that is). Their long-term average however, 
	should be equal to this calculated amount.
\end{example}

We require that mechanism $\mathcal{M}$ is myopically truthful at the steady state. Essentially, this means that under the steady state achieved by the truthful participation of all other users, a user who is not trying to `time' their participation (i.e., wait for a block where the price level randomly fluctuates from the average) would maximize their utility by truthfully reporting their value $v$.
\begin{definition}[Steady State Myopic Truthfulness]
	A mechanism $\mathcal{M}$ is truthful if for all $n, F, v$ and $v' \neq v$, if it holds that
	$$
		v(d) - p \ge v(d') - p',
	$$
	where 
	$(d, p) = \mathcal{M}(v \;|\; F)$ and $(d', p') = \mathcal{M}(v' \; |\; F)$
\end{definition}
In fact, a straightforward characterization further supports the study of 
mechanisms explicitly consisting of delay/price pair menus.
\begin{proposition}[Steady State Myopic Truthfulness Characterization]
	A mechanism $\mathcal{M}$ is truthful if and only if for all $n, F, v_1$ and 
	$v_2$, it holds that $d_1 \ge d_2 \Rightarrow p_1 \le p_2$ for  
	$\mathcal{M}(v_1 \;|\; F) = (d_1, p_1)$ and $\mathcal{M}(v_2 \; |\; F) = 
	(d_2, p_2)$. Additionally, if $d_1 
	=  d_2$ then $p_1 = p_2$.	
\end{proposition}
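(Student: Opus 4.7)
The plan is to prove the two directions of the biconditional separately, with the forward direction (truthfulness implies the monotonicity condition) being a short application of the steady-state truthfulness inequality, and the backward direction being the main subtlety.

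For the forward direction, I would fix two arbitrary value functions $v_1, v_2$ with mechanism outputs $(d_1, p_1)$ and $(d_2, p_2)$, and invoke steady-state myopic truthfulness with true value $v_1$ and alternative report $v_2$. This yields $v_1(d_1) - p_1 \ge v_1(d_2) - p_2$, which rearranges to $p_1 - p_2 \le v_1(d_1) - v_1(d_2)$. Since $v_1$ is strictly decreasing in delay, the assumption $d_1 \ge d_2$ forces the right-hand side to be non-positive, and we conclude $p_1 \le p_2$. When $d_1 = d_2$, the right-hand side is exactly zero, giving $p_1 \le p_2$; the symmetric truthfulness inequality obtained by swapping the roles of $v_1$ and $v_2$ gives $p_2 \le p_1$, forcing equality.

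For the backward direction, I would argue by contrapositive: suppose truthfulness fails, so there are value functions $v, v'$ with outputs $(d, p) = \mathcal{M}(v \mid F)$ and $(d', p') = \mathcal{M}(v' \mid F)$ satisfying $v(d') - p' > v(d) - p$, and derive a violation of the stated monotonicity condition on the pair $\{(d, p), (d', p')\}$. The case $d = d'$ is immediate, since the equality clause forces $p = p'$ and therefore equal utilities, contradicting the strict inequality. The main obstacle is to handle $d \neq d'$: here the hypothesis only pins down the sign of $p - p'$ in terms of the sign of $d - d'$, so the argument must combine this sign information with the strict decrease of $v$ in delay and the strictness of the utility gap. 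I expect this step to be where one leverages the universal quantifier over $v_1, v_2$ in the statement --- effectively viewing the image of $\mathcal{M}$ as a menu and arguing that the assignment $v \mapsto \mathcal{M}(v \mid F)$ must coincide with a utility-maximizing menu choice for $v$, which is where any counterexample to truthfulness surfaces as a pair of outputs violating the monotone shape.
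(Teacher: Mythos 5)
Your forward direction is correct and complete: invoking the steady-state truthfulness inequality for $v_1$ against the report $v_2$ gives $p_1 - p_2 \le v_1(d_1) - v_1(d_2)$, strict monotonicity of $v_1$ in delay makes the right-hand side non-positive when $d_1 \ge d_2$, and the symmetric inequality settles the equality clause. For what it is worth, the paper states this proposition without any proof (calling it a ``straightforward characterization''), so there is no paper argument to compare against; the forward direction you give is surely the intended one.

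The backward direction, however, has a genuine gap, and it sits exactly where you flag uncertainty. The hypothesis constrains only the \emph{shape} of the image of $\mathcal{M}(\cdot \mid F)$ --- larger delay implies weakly smaller price, and price is a function of delay --- but says nothing about \emph{which} value function is assigned to which menu item. The step you propose, ``arguing that the assignment $v \mapsto \mathcal{M}(v \mid F)$ must coincide with a utility-maximizing menu choice,'' is not derivable from the monotonicity condition; it is precisely the missing content. To see the contrapositive fail concretely: take a truthfulness violation $v(d') - p' > v(d) - p$ with $d' > d$. This rearranges to $p - p' > v(d) - v(d') > 0$, i.e.\ $p > p'$, which is exactly the sign pattern monotonicity \emph{predicts} for $d' > d$ --- no contradiction arises, and the symmetric case $d' < d$ is equally consistent. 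A concrete counterexample: let the image be the monotone menu $\{(1, 0.9),\, (2, 0.1)\}$ and let $\mathcal{M}$ assign some $v$ with $v(1) = 1$, $v(2) = 0.8$ to $(1, 0.9)$; this $v$ gets utility $0.1$ but would get $0.7$ by reporting a type mapped to $(2, 0.1)$, so the mechanism is not truthful, yet every pair of outputs satisfies the stated condition. The equivalence only holds under the implicit convention --- consistent with how the paper later defines the tiered mechanism via the utility-maximizing tier choice $X_{ij}$ --- that $\mathcal{M}$ always hands each $v$ its best item from the menu it induces; under that reading the backward direction is immediate and the substantive content of the proposition is exactly the forward direction you proved.
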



\section{Traffic Diversity}



Next, we initiate the study of  diversity in the context of blockchain transaction fee mechanisms. Intuitively, a diverse mechanism should enable transactions with different urgency profiles to make effective use of the platform. As already mentioned, current approaches  do not specifically address this issue, and tend to focus on accommodating transactions with specific urgency profiles. We start by providing a formal definition of diversity.


A number of challenges must be addressed to properly define a notion of diversity. First, deciding which applications should be served and at what rate may depend on external factors. To avoid talking about specific policies of our taste, we choose to abstract this part of the definition by introducing the notion of a diversity policy that our mechanism should implement. The exact choice of this policy is left to the mechanism designer. 

Secondly, we have to choose a vocabulary for phrasing different policies. 
In favor of simplicity, we differentiate among urgency levels by checking whether the value of a transaction   remains above a threshold value when appropriately delayed.   
The intuition is that  the value of high urgency transactions is going to drop quickly when delay increases compared to low urgency transactions. Thus, talking about the value of transactions when delayed is sufficient for our purposes.
In more detail, a diversity policy consists of a set of clauses describing the percentage of transactions ($a_i$) included in the blockchain whose valuation for a certain delay ($d_i$) exceeds some threshold ($p_i$).


\begin{definition}[Diversity policy]
	A diversity policy $P$ is a set $\{(a_i,d_i,p_i)\}_{i\in\mathbb{N}}$, where
	$a_i,d_i,p_i$ in $(0,1],\mathbb{R}_{\geq 1},(0,1]$, respectively. Triplets $(a_i,d_i,p_i)$ in $P$ are called policy clauses.
	
\end{definition}

Besides transactions with different urgency profiles making it to the blockchain, we also need to ensure that they get a sufficiently high quality of service.
Otherwise, our mechanism  may look diverse, but in reality offer very bad quality of service to certain urgency profiles.
This is captured by requiring that the utility of the transactions meeting some policy clause is greater that the utility implied by the clause. 
Finally, if all the clauses of a diversity policy are met when looking at the steady state behavior of the blockchain, we say that the mechanism implements this policy.

\begin{definition}[Implementation] 
	A mechanism $\mathcal{M}$  implements a diversity policy $P := \{(a_i,d_i,p_i)\}_{i \in \mathbb{N}}$ under load $F$  iff for all $i \in \mathbb{N}$ it holds that 
	$$\Expectation \big [|\{ j | u(v_j|\mathcal{M},F) \geq  v_j(d_i) - p_i > 0, \text{ where } (v_1,\ldots,v_n)\leftarrow F \}| \big ] \geq a_i \cdot B$$
\end{definition}


\begin{example}
	Given $F$, the EIP-1559 mechanism finds the value $p$ such that on expectation exactly $B$ transactions have value at least $p$ when delay is equal to $1$. Hence, it implements the policy $P := \{(1,1,p)\}$ under load $F$ (assuming $p$ exists).
\end{example}

\ignore{
It is not always possible to implement a diversity policy, for the simple reason that the load $F$ may not contain (on expectation) enough transactions meeting the policy constraints. 

\begin{definition}
	Load $F$ is 
\end{definition}
}

\section{A Diverse Mechanism}

In this section, we present the Tiered Pricing mechanism. This mechanism can be parameterized to implement different diversity policies based on the goals set by the system designer. As a warm-up, we first argue why EIP-1559 is a bad choice for when diversity is a desired feature of the system.

\subsection{EIP-1559 is not Diverse}

As mentioned earlier, we would like the mechanism designer to be able to select transactions with different urgency profiles to fill the blockchain. In the simplest non-trivial case, the designer would like the mechanism to satisfy a policy $P$ that includes some policy clause $(a,d,p)$, where $a>0,d>1$. We show next, that given $(a,d,p)$, there exists a simple load distribution $F$ with two urgency levels  (see Figure~\ref{fig:badload}) for which EIP-1559 does not implement any policy $P$ where $(a,d,p)\in P$. The main reason being that  EIP-1559 selects which transactions to add to the blockchain based on the value of immediate inclusion. Hence, when the network is congested, it will always select transactions with high urgency and never serve medium or low urgency transactions.

\begin{definition}\label{def:badload}
	Given parameters $a,d,p$ such that $d >1, a,p >0$, let load $F$ be defined as follows:
	\begin{itemize}
		\item with probability $1/2$, $F$ outputs a valuation function $v$ where $v(1) := 2 p$ and $v(d) := 0.2 \cdot p$,
		\item with probability $1/2$, $F$ outputs a valuation function $v$ where $v(1) := 3p/2$ and $v(d) := p$
	\end{itemize}
	and all other values of the two functions are chosen so that they are monotonic.
\end{definition}

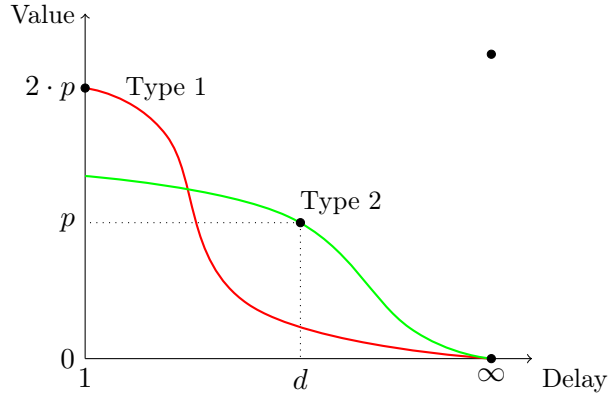
\begin{figure}[h]
	\centering
	\begin{tikzpicture}[scale=2.7, xscale = 2, yscale=1, domain=0:1, range=0:1, 
	samples=400, smooth]
		
		\draw[->] (0, 0) node[left] {$0$} -- (1.1, 0) node[below right] {\small 
		Delay};
		
		\draw[->] (0, 0) node[below] {$1$} -- (0, 1.7) node[left] {\small Value};

		\draw [color=red, thick] plot [smooth, tension = 0.7] coordinates { 
			(0, 4/3) 
			(0.2, 1.1)
			(0.4, 0.27)
			(1, 0)};
		\node at (0.2, 4/3) {\small Type 1};
		
		\draw [color=green, thick] plot [smooth, tension = 0.7] coordinates { 
			(0,0.9) 
			(0.5, 0.7) 
			(0.8, 0.15) 
			(1, 0)};
		\node at (0.63, 0.77) {\small Type 2};
		
		\draw[dotted] (0.53, 0.67) -- (0, 0.67) node[left] {$p$};
		\node [scale=0.3, draw, circle, fill] at (0.53, 0.67) {};
		\draw[dotted] (0.53, 0.67) -- (0.53, 0) node[below] {$d$};

		\node [scale=1, left] at (0, 4/3) {$2\cdot p$};
		\node [scale=0.3, draw, circle, fill] at (1, 1.5) {};
		\node [scale=0.3, draw, circle, fill] at (0, 4/3) {};   
				
		\node [scale=0.3, draw, circle, fill] at (1, 0) {};
		\node[below] at (1,0) {$\infty$};
		
	\end{tikzpicture}
	
	\caption{The distribution of Definition~\ref{def:badload}. Note, that 
	EIP-1559 will only select transactions of the first type.}
	\label{fig:badload}
\end{figure}

\begin{proposition}\label{th:EIPnotdiverse}	 
	Let $(a,d,p)$ be a policy clause where $d>1,a,p>0$.
	EIP-1559 does not implement any policy $P$ containing $(a,d,p)$ under the load $F$ of Definition~\ref{def:badload}.
\end{proposition}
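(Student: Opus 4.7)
The plan is to exploit the precise engineering of the load $F$: for every valuation in its support one has $v(d) - p \leq 0$, which immediately rules out the strict positivity required by the Implementation definition.

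The argument proceeds by direct inspection of the two atoms of $F$. From Definition~\ref{def:badload}, the first type (probability $1/2$) has $v(d) = 0.2\,p$, so $v(d) - p = -0.8\,p < 0$. The second type (probability $1/2$) has $v(d) = p$ exactly, so $v(d) - p = 0$, which already fails the strict inequality $> 0$. Hence no valuation drawn from $F$ satisfies $v_j(d) - p > 0$, and for every realization $(v_1, \ldots, v_n) \leftarrow F$ the indicator set $\{\, j : u(v_j \mid \mathcal{M}, F) \geq v_j(d) - p > 0 \,\}$ is empty, independently of what the mechanism $\mathcal{M}$ actually delivers.

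Consequently the expectation appearing on the left-hand side of the Implementation condition equals $0$, which is strictly smaller than $a \cdot B$ because $a, B > 0$. Therefore EIP-1559 fails the implementation condition for the clause $(a, d, p)$, and so no policy $P$ containing this clause is implemented. There is no real obstacle in the proof; the substance of the counterexample lies in the construction of $F$ itself, where Type 2 is placed exactly on the boundary $v(d) = p$. This placement formalises the intuition preceding the proposition: since EIP-1559 is a posted-price mechanism for immediate inclusion, a medium-urgency user whose value at delay $d$ merely matches (rather than strictly exceeds) the policy threshold $p$ cannot be credited toward the clause, regardless of the price that EIP-1559 settles on in steady state. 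For this reason no case analysis on the arrival rate $n$ or on the specific value of $p_0$ computed by the EIP-1559 update rule is needed.
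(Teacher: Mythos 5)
Your argument never uses a single property of EIP-1559: as written, it would establish that \emph{no} mechanism implements any policy containing $(a,d,p)$ under the load of Definition~\ref{def:badload}. That proves far too much. The paper explicitly states, in the footnote following the proposition, that this load is \emph{not} artificially unimplementable and that the tiered mechanism introduced afterwards can serve precisely such policies on it. The discrepancy comes from the one fact your proof rests on: type-2 valuations sit exactly on the boundary $v(d) = p$, while the Implementation definition asks for the strict inequality $v_j(d) - p > 0$. That is an off-by-epsilon artifact of the write-up, not the content of the proposition --- the construction is plainly intended so that type-2 transactions \emph{do} meet the clause (this is the point of Figure~\ref{fig:badload} and of the footnote). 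Concretely: replace $v(d) := p$ by $v(d) := p + \epsilon$ for any $\epsilon \in (0, p/2)$, or read the clause with a weak inequality, and your proof evaporates entirely, while the proposition remains true and the paper's proof goes through verbatim. An argument this fragile to a boundary placement, and which makes the distinction between EIP-1559 and tiered pricing vanish, has not captured the statement.

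The missing idea is the analysis of the price that EIP-1559 converges to under congestion, which is where all the mechanism-specific content lives. Since $n > B$, a steady-state posted price $p_0$ must satisfy $n \cdot \Pr[v(1) \ge p_0] \le B$; because every valuation in the support has $v(1) \ge 3p/2$, any $p_0 \le 3p/2$ yields demand $n > B$, so the cutoff is forced above $3p/2$, and for large $n$ it is pushed toward $2p$ (this is the paper's argument). In either case no type-2 transaction is ever included, because EIP-1559 filters solely by value at immediate inclusion. The transactions that \emph{are} included are all of type 1, and for those $v(d) = 0.2\,p < p$, so none can be credited toward the clause $(a,d,p)$; the expectation in the Implementation definition is therefore $0 < a \cdot B$. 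Note that this starves exactly the lower-urgency type that the clause is meant to protect and that the tiered mechanism later serves --- which is the substance of the proposition. Any correct proof has to make some version of this cutoff-price argument; the case inspection of the two atoms at the threshold cannot substitute for it.
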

\begin{proof}
	Due to the congestion assumption $n > B$ (i.e, that the number of value 
	functions sampled from $F$ is a lot larger that the available throughput 
	$B$), it follows 
	that EIP-1559 will have to set the cutoff price to closer $2\cdot p$ for large 
	enough $n$. Thus, no transactions of the 
	second type will make it to the blockchain. Given that transactions of the first 
	type do not meet the policy clause $(a,d,p)$, the theorem follows. 
\end{proof}

We have thus argued that EIP-1559 cannot implement non-trivial diversity policies even when the load has a simple form.~\footnote{Note, that the load of Definition~\ref{def:badload} is not artificially ``unimplementable'', as it contains on expectation enough transactions to satisfy the relevant policy. In fact, the mechanisms we present next will be able to deal with this type of policies and loads.}

\subsection{The Tiered Pricing Mechanism}

Next, we describe the Tiered Pricing mechanism.  Our mechanism separates available space in each block into a maximum  of $k$ different tiers. Each tier has its own delay $(d_i)_{i\in[k]}$ and price $(p_i)_{i\in[k]}$ parameters, with delays increasing and prices decreasing in successive tiers. The delay of the first tier is always set to 1, i.e., it is minimal, while all other delays and prices change dynamically respecting a set of parametric constraints. The general goal of the mechanism is to try to select minimal constraint-respecting delays and prices, while at the same time trying to maximize the number of tiers and their usage.

To be able to serve users with different urgency profiles, and thus implement the relevant policies, the delays and prices of successive tiers are related in a parameterizable way:
\begin{align} \label{ineq:TP}
	d_{j+1} \ge \lambda_j \cdot d_j \text{ for } \lambda_j > 1,
	\text{ and } p_{j+1} \le \mu_j \cdot p_j \text{ for } \mu_j < 1
\end{align}
where $(\lambda_j,\mu_j)_{j\in[k-1]}$ are parameters of the mechanism.
Any time these constraints are not met, the mechanism changes one or more of the tier parameters (delays,prices, number of tiers) in an effort to satisfy them. 
In its initial state the mechanism has a single active tier of size $B$.

Every time a new block is generated, parameters are subject to change.
Similarly to EIP-1559, the price of each (active) tier is adjusted based on the fullness of this tier in the previous block.  
The delay of different tiers changes less often (every $dFreq$ blocks), allowing some time for prices to catch up on changes on the load and stabilize.
Delays change based on the relation of successive tier prices; if $p_{i+1} > \mu_i \cdot p_{i}$, the delay of tier $i+1$ is increased to try to  push $p_{i+1}$ down to meet the relevant constraint. 
Otherwise, if $d_{i+1} \ge \lambda_i \cdot d_i$,  the delay of tier $i+1$ is decreased by $1$ with some probability $(pDecrease)$ to increase the overall efficiency of the system by avoiding possibly unnecessary delays. 
Even less often (every $tFreq$ blocks), the number of tiers may change. Namely, if the value of the last tier is greater than $addTierPrice$ and the number of active tiers $m$ is less than $k$, a new active tier is added with size $a_{m+1}\cdot B$, price $newTierPrice$ and delay $\lambda_m \cdot d_{m}$. The space taken from tier $m+1$ is reduced from tier $1$. In a similar fashion, if the price of last tier is less than $removeTierPrice$ and there are more than one active tiers, then last tier is deleted and the space used returns to tier $1$. The rationale is that if the price of the last tier is very small, then there is insignificant demand for this urgency profile, and thus it makes sense to remove the tier altogether. On the other hand, if the price of the last tier is sufficiently high, there is enough demand for this urgency profile and it makes sense to spawn a new tier to see if there is demand for even less urgent transactions. 
For a formal description of the parameter update process we point to Mechanism~\ref{alg:TP}.

Finally, in order for our mechanism to be effective, the delay penalty associated with a tier must be different than the general delay of the system, otherwise users would just pay to be included in the cheapest tier and lie about their preferences. The delay penalty is enforced by delaying the processing of transactions included in a tier by the tier's delay, i.e., for all ledger purposes a transaction is ignored until a block is produced whose timestamp exceeds that of the inclusion block by the delay amount. Accordingly, transactions that are added to a tier pay a fee equal to the tier's price.

\begin{algorithm}[H] 
		\caption {\small The Tiered Pricing parameter update function is parameterized by the maximum number of tiers $k$, the size,delay and price constraint parameters $(a_i,\lambda_i,\mu_i)_{i\in[k]}$, the total size of the block $B$, the delay and number of tier update frequencies $dFreq,tFreq$, the probability delays are decreased $pDecrease$, the price assigned to new tiers $newTierPrice$, the price bellow which the last tier is removed $removeTierPrice$. It takes as input the parameters of the previous block, the level of fullness of each tier in the previous block $(f_i)_i$, and the times the parameters have been updated so far $t$. It outputs the new set of parameters which consist of the number of tiers and their respective sizes, delays and prices.} 
		\label{alg:TP}
		\begin{algorithmic}[1]
			\small			
			\Function{updateTierParameters}{$m,((B_i,d_i,p_i))_{i\in [m]}, (f_i)_{i\in[k]}, t$}
			\State $(p'_i)_{i\in[m]} :=$ $($\Call{Update-EIP1559}{$p_i,f_i$}$)_{i\in[m]}$ \Comment{Update prices as in Mechanism~\ref{alg:EIP}}
			\State $(B'_i,d'_i)_{i\in[m]} := (B_i,d_i)_{i\in[m]}$ 
			\If{ $(t\mod dFreq = 0 )$}	\Comment{Update delays}
			\State $(d'_i)_{i\in[m]} :=$ \Call{UpdateDelays}{$m,((d_i,p'_i))_{i\in[m]}$}		
			\EndIf
			\If{ $(t\mod tFreq = 0 )$} \Comment{Update the tier number and sizes}
				\State $(m',B'_1,B'_m) :=$ \Call{UpdateTierSizes}{$m, B_1,B_m, p'_{m}$}		
				\If{ $(m'>m)$} 	\Comment{Set the delay/price of the new tier}
					\State $p'_{m'} := newTierPrice$
					\State $d'_{m'} := \lambda_{m'-1} d'_{m'-1}$
				\EndIf
			\EndIf
			
			\State return $(m',((B'_i,d'_i,p'_i))_{i\in [m']})$ \Comment{The new parameter set}
			\EndFunction
			
			\Statex
			\Function{UpdateDelays}{$m,(d_i,p'_i)_{i\in[m]}$}
			\For{$i:=1$ to $m-1$}
				\If{$(p'_{i+1} > \mu_i \cdot p'_{i})$} \Comment{If prices are too close, increase delay}
 				\State $d'_{i+1} := d_{i+1}+1$
				\Else  \Comment{Otherwise, decrease delay}
				\State with probability $pDecrease$ do $d'_{i+1} := d_{i+1} - 1$ 
				\State $d'_{i+1} := \max\{d'_{i+1}, \lambda_{i} \cdot d'_{i}  \}$ \Comment{Ensure delay constraint is satisfied}
				\EndIf	
			\EndFor		
			\State return $(d'_i)_i$
			\EndFunction

			\Statex
			\Function{UpdateTierSizes}{$m, B_1,B_m, p'_{m}$}
			\If{($m> 1$ and $p'_{m} < removeTierPrice$)} \Comment{Remove last tier}
			\State $m' := m - 1$
			\State $B'_m := 0$
			\State $B'_1 := B_1 + B_m$
			\EndIf
			\If{$(m < k$ and $p'_{m} > addTierPrice)$} \Comment{Add new tier}
			\State $m' :=  m+1$
			\State $B'_{m'} := a_{m'}$
			\State $B'_1 := B_1 - a_{m'}$ 
			\EndIf
			\State return $(m',B'_1, B'_m)$
			\EndFunction
			
		\end{algorithmic}		
	\end{algorithm}

As observed earlier, the Tiered Pricing mechanism  tries to implement policies whose clauses satisfy Inequality~\ref{ineq:TP}, i.e., the delay and prices of successive clauses satisfy the relevant inequalities. As these inequalities are parametric, the mechanism designer is given the freedom to select the relevant parameters to meet the system's needs/requirements. In the next sections, we provide a first formal analysis of Tiered Pricing and show among others that for any such selection of parameters $k,(a_i,\lambda_i,\mu_i)_i$, for a large class of load distributions, there exist delays and prices that meet Inequality~\ref{ineq:TP} and on which the system will converge in the medium-term~\footnote{In the medium-term only the  prices and delays of tiers change, as the tier number and sizes are updated less frequently.}. Moreover, the selected delays satisfy a notion of Pareto-optimality, as decreasing any of the tier delays would change prices so that the tier constrains are not satisfied anymore.
Following a similar set of arguments, our results can be extended to show that the long-term behavior of Tiered Pricing is optimal, in the sense that the mechanism converges to a set of parameters that maximize the number of tiers, while minimizing the delay parameters. 

\vspace{0.1cm}

We proceed to formally analyze the Tiered Pricing mechanism in the next section.

\ignore{
\begin{theorem}[Informal.]
	For any demand-regular $F$, Tiered Pricing (Algorithm~\ref{alg:TP}) parameterized by $k,par$ implements policies in $IP^k_{par}$.
\end{theorem}

\paragraph{Delay minimality.} While quite expressive, the inclusive class contains policies that are impractical, e.g., policies where the delays are huge and still meet the constraints of the class. To deal with this issue, the Tiered Pricing mechanism attempts to decrease the delay parameters as much as possible. In fact, as we show in Section~\ref{?}, the delay vector the mechanism outputs is minimal, in the sense that ...?
\todo{more work is needed here to identify what kind of minimality objective the mechanism implements}

\paragraph{Tier resize.} say something about the tier resize part... Our mechanism tries to maximize the number of available tiers while respecting the constraints

\begin{definition}[Meta-mechanism]
	A meta-mechanism $metaM$ is a randomized algorithm that takes as input a valuation function distribution $F$ and outputs a blockchain mechanism $(B,d,p)$. 
\end{definition}

\begin{definition}[Implementation]
	A meta-mechanism $metaM$ implements a class of policies $C$ under load $F$ iff the mechanism output by $metaM(F)$ implements a policy in $C$ under load $F$ with probability $1$
\end{definition}
}

\ignore{
\subsection{Inclusive Policies}

Next, we describe two scenarios of how a mechanism designer could select which diversity policies it wants the blockchain to implement. In the first scenario, the designer knows of $k$ groups of applications with different urgency levels she wants the blockchain to service, each occupying a percentage $(a_i)_i$ of the space.  The designer may know that applications in group $i$ have significant value up to some delay $\delta_i$, while not knowing the exact level.  Hence, a policy favoring transactions with high value around delays $(d_i)_i$ seems adequate. 
In the second scenario, the designer does not know of the exact urgency levels of the applications the blockchain is going to service, and instead she wants the mechanism to implement a policy with $k$ ``discrete'' service levels. To ensure that a variety of applications are serviced, the policy should have clauses with delays and prices that are sufficiently distant from each other, e.g., $d_{i+1} \geq \delta_i \cdot d_i + \epsilon_i \text{ or } p_i \geq \pi_i \cdot p_{i+1} + \rho_i  $, for appropriate parameters $\delta_i,\epsilon_i,\pi_i,\rho_i$.
Lack of complete information in any of these two scenarios could mean that the designer decides on a class of acceptable policies, and the mechanism adaptively tries to satisfy one of them.

Towards aiding the design of diverse mechanisms, we describe a parametric class of policies capturing the scenarios described above.  In contrast to current approaches where only transactions with specific urgency profiles are serviced, we call it the ``inclusive'' class of policies. Moreover, we later describe a mechanism, that given the relevant scenario constraints, always implements a policy in the inclusive class.

In more detail, the inclusive class contains all policies where the delays and prices of policy clauses have some predefined distance between them, i.e, as $i$ increases the delays (resp. prices) increase (resp. decrease) at least at a certain rate. The class is parameterized by the exact rates, corresponding to the design choices mentioned above, as well as the number of clauses in the policy ($k$). It is assumed  that the percentages ($a_i)$ of the policy clauses sum up to $1$.

\begin{definition}\label{def:inclusive}
	Let $IP^k_{\{(a_i,\delta_i,\epsilon_i,\pi_i,\rho_i)\}_{i\in[k]}}$ denote the inclusive class of policies with parameters $k\in \mathbb{N}$, $a_i,\delta_i,\pi_i \in \mathbb{R}_{> 1},\epsilon_i,\rho_i \in \mathbb{R}_{\geq 0}$ for $i\in[k]$, and $\sum_{i}a_i =1$. $IP^k_{\{(a_i,\delta_i,\epsilon_i,\pi_i,\rho_i)\}_{i\in[k]}}$  contains all policies $P:=\{(a_i,d_i,p_i)\}_{i\in[k]}$ where for any load $F$ and $i\in[k-1]$ it holds 	that
	$$d_{i+1} \geq \delta_i \cdot d_i + \epsilon_i \text{ and } p_i \geq \pi_i \cdot p_{i+1} + \rho_i  $$
\end{definition}



}

\ignore{
\subsection{EIP-1559 is not Inclusive}

We next show that for $k$ greater than $1$ and any set of parameters $par$ there exists a load $F$ for which EIP-1559 does not implement any policy in $IP^k_{par}$. The main reason being that  EIP-1559 selects which transactions to add to the blockchain based on the value of immediate inclusion of the transaction. Hence, when the network is congested, it will always select transactions with high urgency and never serve medium or low urgency transactions.

\begin{theorem}\label{th:EIPnotdiverse}
	For any $k \geq 2$ and any set of parameters $\{(a_i,\delta_i,\epsilon_i,\pi_i,\rho_i)\}_{i\in[k]}$, there exists some load $F$ such that  EIP-1559 does not implement any policy in $IP^k_{\{(a_i,\delta_i,\epsilon_i,\pi_i,\rho_i)\}_{i\in[k]}}$.
\end{theorem}

We point to the Appendix for the proof.

\begin{proof}[{\bf Proof of Theorem~\ref{th:EIPnotdiverse}}]
	Given the class parameters, we will describe a load $F$ for which EIP-1559 does not implement any policy in  $IP^k_{\{(a_i,\delta_i,\epsilon_i,\pi_i,\rho_i)\}_{i\in[k]}}$. Let $d_2 := \delta_i \cdot d_1 +\epsilon_1$ and $p_2 := (p_1-\rho_1)/\pi_1$. Assume $F$ outputs with probability $1/2$ valuation functions from two urgency classes, $h_1,h_2$. For the first class it holds that $\Pr[ v_0 \geq p_1  | h_1 ] = 0.99$ and that $h_1(d_2-1)=0$ (w.l.o.g, assume that $d_2-1 > d_2$). For the second class, it holds that $\Pr[ v_0 < (p_1+p_2)/2 | h_2 ] = 0.99\cdot a_2/(1-a_2)$ and $\Pr[ v_0 h_2(d_2) \geq p_2  | h_2 ] = 0.99$. 
	Due to our congestion assumption, i.e, that the number of functions output by $F$ is a lot larger that the available throughput $B$, it follows that EIP-1559 will set the cutoff price close to $p_1$. Moreover, on expectation less than 
	$$  0.99\cdot a_2/(1-a_2)/ (0.99\cdot a_2/(1-a_2)+0.99) < a_2$$
	transactions of the second type will make it to the blockchain. Given that transactions of the first type do not meet the second policy clause of any policy in $IP^k_{\{(a_i,\delta_i,\epsilon_i,\pi_i,\rho_i)\}_{i\in[k]}}$, it follows that EIP-1559 will not output enough transactions on expectation to meet the second policy clause, and the theorem follows. 
\end{proof}

\begin{remark}
	We note that $F$ in Theorem~\ref{th:EIPnotdiverse} is not artificial in the sense that it contains transactions whose urgency profiles are sufficiently diverse for a policy in $IP^2$ for appropriate parameters to be implementable. In fact, we strengthen out result by showing later that there exists a mechanism that can implement policies in $IP^2$ for this $F$ and appropriate parameters.
\end{remark}

We have thus argued that the EIP-1559 mechanism is not suitable for implementing inclusive policies. In the next section, we will present a parametric mechanism that is sufficient for this purpose.
}


\ignore{

	\begin{definition}[Diversity policy]
		A diversity policy $P$ is a set $\{(a_i,f_i))_i$ where 
		\begin{itemize}
			\item $f_i$ is a function that takes as input a value function distribution $F$ and outputs a subset of the delay/fee space $[0,\infty)\times [0,1]$
			\item $n_i\in \mathbb{R}$ and $\sum_i n_i = n$.
		\end{itemize}
		
		A \emph{diversity policy} $P$ is a sequence $((a_i(\cdot),d_i(\cdot),p_i(\cdot)))_i$ where  $a_i,d_i,p_i$ are functions that take as input a value function distribution $F$ and output a number in $\mathbb{R}$. 
	\end{definition}
	
	\begin{definition}[Implementing a diversity policy]
		For some $n$, a blockchain $(\vecc B, \vecc d, \vecc p)$  implements a diversity policy $P := ((n_i,f_i))_{i}$ under load $F$  iff for all $i$ it holds that 
		$$\Expectation \big [|\{ j | M(v_j|F) \in f_i(F) \text{ where } (v_1,v_2,\ldots) = F   \}| \big ] = n_i$$
		
	\end{definition}
	
	---------------------------------------------------
	
	\subsection{Impossibility of Implementing any Policy}
	
	Having formalized the notion of a diversity policy, we can now talk about which policies are implementable in our model.

	To showcase the non-triviality of our definition, we first describe a simple policy that cannot be implemented in our setting.

	\begin{theorem}
		The dominating policy cannot be implemented by any mechanism in our model. 
	\end{theorem}
}

\section{Tiered Pricing is Diverse}
It is quite challenging to study the entire tiered pricing mechanism, so we break 
the analysis up into discrete modules. We first adapt our steady state model to 
this particular mechanism. Then, we show that by fixing the tier sizes and 
delays we can show that there always exists a unique `average' price that 
satisfy our target tier sizes. Given this property of the price update inner loop, 
we can 
proceed to the less frequent delay updates of the tiered mechanism. 
Finally, we show that such delays satisfying Inequality~\ref{ineq:TP} exist (with 
accompanying prices) for any set of parameters $k,(a_i,\lambda_i,\mu_i)_{i}$. 

\subsection{Tiered Transaction Mechanisms}
We focus on a particular case of transaction mechanisms with prices and 
delays. Specifically, we split up the blockchain throughput into \emph{tiers}, 
each of which offers a distinct price and delay option.
Suppose that our blockchain has tiers $\{0, 1, 2, \ldots, k\}$ with
corresponding sizes $\vecc B = (B_1, B_2, \ldots, B_k)$ and delays $\vecc d = 
(d_1,  d_2, \ldots , d_k)$ 
such that:
\begin{equation}
	B = \sum_{j \in [k]} B_j.
\end{equation}
As before, we refer to $B$ as the \emph{throughput} of the blockchain.  Since not all transactions will be 
included, tier $0$ is special and is reserved for those. This special tier has $B_0 
= \infty$. Tiers with larger indices offer a  reduced level of service and have 
strictly higher delays (i.e., $d_{j+1} > d_{j}$).

Since we study the steady state, let $n \ge B$ be the number of new 
transactions $T = \{t_1, t_2, \ldots, t_n\}$ generated, which will be filtered by 
the prices in order to fill the blocks. Each tier is associated with a price $\vecc p 
= (p_1, p_2, \ldots, p_k)$ which is 
needed to define the \emph{utility} of each transaction. For $t_i$, the utility 
generated if included in tier $j$ is:
\begin{equation}\label{def:trx_utility}
	u(i, j) =
	\begin{cases}
		v_i(d_j) - p_j &\text{ if } 0< j \le k\\
		0 &\text{ if }  j  = 0
	\end{cases}.
\end{equation}
Notice that the utility reflects that each transaction \emph{only} generates 
value if it is included before its corresponding deadline. Otherwise, it is better to 
be excluded. Abusing the notation, we will refer to the utility of transaction $i$ 
given delay $d$ and price $p$ as $u_i(d, p).$ We use the tuple $(\vecc B, \vecc 
d, \vecc p)$ as a shorthand to refer to a particular blockchain.\\


The tiered transaction mechanism $\mathcal{M}$ would allocate each 
transaction to the tier which maximises its utility. Since two tiers 
could offer the same utility (e.g., by having higher delays at lower prices), there 
could be a tie for the optimal tier. In this unlikely event (which has zero 
probability of happening given our distributional assumptions 
\Cref{def:demand_regular} introduced later)
transactions are allocated to utility equivalent tiers randomly. Let $X_{ij}$ be 
the 
random variable indicating that transaction $i$ chose tier 
$j$.  Given the random set of `desired' utility maximizing tiers for transaction 
$i$
\begin{equation}
	D_{i} = \{j \in \{0\} \cup [k] \;|\; u(t_i, j) \ge u(t_i, j') \quad \forall j' \},
\end{equation}
we can define the distribution of $X_{ij}$:
\begin{equation}
	X_{ij} =
	\begin{cases}1 \text{ with probability }
		\frac{B_j}{\sum_{j \in D_i} B_j} &\text{ if } j \in D_{i}\\
		0 &\text{otherwise}
	\end{cases}.
\end{equation}
Having defined how the transactions select tiers, we define the \emph{demand} 
(or \emph{used 
	space})  of 
each tier as 
\begin{equation}\label{eq:utility_maximizer}
	T_j = \sum_{i \in [n]} X_{ij}.
\end{equation}
The expected demand should be at most as much as the allocated size the 
tier.
\begin{equation}\label{eq:expected_size}
	\expect{T_j} \le B_j,
\end{equation}
except for $T_0$, which needs to capture every transaction that's left out: 
\begin{equation}\label{eq:expected_size_rejected}
	\expect{T_0} = n - \sum_{j \in [k]} \expect{T_j}.
\end{equation}
Note that Equation \ref{eq:expected_size_rejected} is always satisfied by the 
definition of $X_{ij}$.

\subsection{Distributional Assumptions}
To prove the existence results, we need to impose certain mild regularity 
assumptions on the distribution $F$. Specifically, that the expected demand is 
continuous and behaves `predictably' as the delay and price change.


\begin{definition}[Demand Regular]\label{def:demand_regular}
	The value function distribution $F$ is demand regular if the valuations 
	generated are parameterized by a tuple $(v_0, h)$ where $v_0 \in [0,1]$ is 
	the value for no delay and $h : \mathbb{R}_{\ge 1} \rightarrow [0,1]$ is the 
	discount factor and have the following form
	$$
	v(d \; | \; v_0, h) = v_0 \cdot h(d).
	$$
	Moreover, we need that
	\begin{itemize}
		\item $h(d)$ is continuous and strictly decreasing in $d$.
		\item $\lim_{d \rightarrow \infty} h(d) = 0$.
		\item The probability density function of the marginal 
		distribution of $v_0$, defined as 
		$$f(v_0 | h) = \frac{d \prob{v \le v_0 | h}}{dv_0}$$
		exists for all $h$.
		\item $f(v_0 | h) > 0$ for $v_0 \in [0,1]$.
	\end{itemize}
\end{definition}
\begin{remark}
	This assumption is similar to the more common 'no point masses' assumption 
	for single dimensional valuations and quasilinear utilities.
\end{remark}

Both conditions are necessary to ensure that a wide range of transaction fee 
mechanism designs have a well defined steady state behavior. For instance, 
notice that without the positive pdf requirement for $v_0$ it is impossible to 
guarantee that stable prices exists, even for Ethereum with just one tier. This is 
clear if the demand exceeds the supply and everyone has exactly the same 
valuation. However, the other requirements are also crucial. An example 
highlighting the need for strictly decreasing temporal 
discounts can be found in \Cref{obs:no_price}.

It is not always easy to work directly with these probability distributions 
because they are over functions, instead of just real numbers. We provide the 
following lemma which often comes in handy and provides some guarantee that 
the assumptions lead to better mathematical properties.
\begin{lemma}\label{lemma:continuity-in-p-d}
	For any blockchain $(\vecc B, \vecc d, \vecc p)$ and regular demand $n, F$, 
	$\expect{T_j}$ is continuous in $\vecc p$ and $\vecc d$. 
	This holds \textbf{even} if we drop the last regularity assumption, namely 
	that $f(v_0 | h) > 0$.
\end{lemma}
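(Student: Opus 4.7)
The plan is to reduce the claim to a statement about a single transaction, condition on the discount factor $h$, identify the (piecewise-linear) structure of the optimal-tier decision as a function of $v_0$, and then use the continuity of the conditional CDF of $v_0$ together with dominated convergence. First I would observe that, by i.i.d.\ sampling and linearity, $\expect{T_j} = n\cdot \Expectation[X_{1j}] = n \cdot \Expectation_h\bigl[\Expectation[X_{1j}\mid h]\bigr]$, so it suffices to show that the inner function $g_j(\vecc{p}, \vecc{d}, h) := \Expectation[X_{1j}\mid h]$ is continuous in $(\vecc{p}, \vecc{d})$ for each $h$ in the support, together with a uniform bound of $1$ so that the outer integral inherits continuity via dominated convergence.

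Next I would analyze $g_j$ for a fixed $h$. Since $v(d\mid v_0,h) = v_0 \cdot h(d)$, the utility offered by tier $\ell \ge 1$ to a transaction of type $(v_0, h)$ is the affine function $v_0\, h(d_\ell) - p_\ell$ of $v_0$, while tier $0$ always has utility $0$. Because $h$ is strictly decreasing and the $d_\ell$ are strictly ordered, the $k+1$ affine functions have pairwise distinct slopes $h(d_1) > h(d_2) > \dots > h(d_k) > 0$ (plus the slope $0$ for tier $0$). Their upper envelope is a convex piecewise-linear function of $v_0$, so the set $A_j(\vecc{p}, \vecc{d}, h) \subseteq [0,1]$ of values of $v_0$ on which tier $j$ strictly maximizes utility is an interval $(a_j, b_j)$. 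Its endpoints are either $0$, $1$, or pairwise intersection points of the form $(p_\ell - p_{\ell'})/(h(d_\ell)-h(d_{\ell'}))$, and the index structure of which tiers appear on the envelope is locally constant in $(\vecc{p},\vecc{d})$; consequently $a_j$ and $b_j$ are continuous functions of $(\vecc{p}, \vecc{d})$.

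Now I would use the regularity of $F$. The existence of the density $f(v_0\mid h)$ makes the conditional CDF $F_{v_0\mid h}$ continuous (no point masses), so
\[
g_j(\vecc{p}, \vecc{d}, h) \;=\; F_{v_0\mid h}\bigl(b_j(\vecc{p}, \vecc{d}, h)\bigr) - F_{v_0\mid h}\bigl(a_j(\vecc{p}, \vecc{d}, h)\bigr)
\]
is continuous in $(\vecc{p}, \vecc{d})$ for every $h$. Note that the randomized tie-breaking on the boundary of $A_j$ contributes zero to $g_j$ because the set of $v_0$ giving a tie is finite (at most $\binom{k+1}{2}$ points) and hence has measure zero under any distribution with a density; this is precisely the place where absolute continuity, rather than strict positivity, is what is needed, explaining why the positivity clause $f(v_0\mid h)>0$ can be dropped. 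Finally, $|g_j| \le 1$ uniformly, so dominated convergence yields continuity of $\Expectation_h[g_j]$ and hence of $\expect{T_j}$.

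The main obstacle I anticipate is the bookkeeping around the upper-envelope structure, specifically verifying that as $(\vecc{p}, \vecc{d})$ varies continuously, the endpoints $a_j, b_j$ change continuously even across parameter configurations where a tier enters or leaves the envelope (an endpoint collapses, $a_j = b_j$, and $A_j$ momentarily becomes empty). This is handled by noting that such collapse events make $g_j$ approach $0$ continuously from both sides because the CDF is continuous; no jump is introduced by the tie-breaking rule since the tied set still has measure zero in the limit. Everything else is a routine continuity-of-integration argument.
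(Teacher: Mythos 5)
Your proposal is correct, but it takes a genuinely different route from the paper's own proof. The paper argues quantitatively, via an explicit $\epsilon$--$\delta$ estimate: it truncates to a set $H^+$ of discount functions on which the conditional density $f(v_0\mid h)$ is bounded by some $M$ chosen so that $\prob{H^+}\ge 1-\epsilon/(6n)$, bounds the horizontal displacement of the preference intervals $I_j$ by $\delta_{p_j}/(h(d_j)-h(d_{j+1}))$, introduces a second high-probability set $H(\delta_{p_j},\delta)$ of discount functions for which this displacement is uniformly small, and assembles the final bound by an $\epsilon/3$-style union argument; moreover, it only writes out the perturbation of a single price $p_j$ and asserts that the extension to the full vectors $\vecc p$ and $\vecc d$ is routine. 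You instead argue qualitatively: conditioning on $h$, the tier utilities are affine in $v_0$ with pairwise distinct slopes $h(d_1)>\cdots>h(d_k)>0$ (plus slope $0$ for tier $0$), so the winning region of tier $j$ is an interval whose conditional measure $g_j(\vecc p,\vecc d,h)$ is continuous in the parameters because the conditional CDF is continuous (atomless), and the mixture over $h$ inherits continuity by dominated convergence with the trivial bound $|g_j|\le 1$. Your route buys three things: it treats $\vecc p$ and $\vecc d$ symmetrically at no extra cost; it isolates exactly what the density assumption is used for (absolute continuity, not positivity and not boundedness), which directly justifies the lemma's claim that the clause $f(v_0\mid h)>0$ can be dropped; and it sidesteps a tacit assumption in the paper's argument, namely that the conditional densities are almost surely bounded so that a suitable $M$ exists --- your version needs only a continuous conditional CDF. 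What the paper's argument buys in exchange is an explicit modulus of continuity (potentially useful for convergence rates of the price-update dynamics), which a soft dominated-convergence argument does not provide. One small caution: your intermediate claim that the set of tiers appearing on the upper envelope is ``locally constant'' in $(\vecc p,\vecc d)$ fails at degenerate parameters where a line is tangent to the envelope, but you flag and repair exactly this in your final paragraph (the collapsing interval has conditional measure tending to zero); alternatively, pointwise a.e.\ convergence of the indicator of the strict-argmax set, followed by dominated convergence in $v_0$, dispatches the same issue without tracking endpoints at all.
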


\subsection{Compatible Blockchains}
Given these definition of supply (i.e., the blockchain and associated tiers) and 
demand, we can define the property that any such blockchain should 
have to be practical. Namely, that the expected demand of each tier fits within 
it.
\begin{definition}[Compatibility]\label{def:compatible}
	The blockchain $(\vecc B, \vecc d, \vecc p)$  is compatible with load $n$ 
	and $F$ if \Cref{eq:expected_size} is 
	satisfied for all tiers.
\end{definition}
The abstraction is that such a set of prices captures the essence of what 
happens at equilibrium, under any posted price scheme (e.g., similar to 
EIP-1559) and adapted to multiple tiers.
\begin{remark}
	This definition seems to exclude pricing systems where transactions 
	are \textbf{queued} into mempools and might wait several blocks before 
	being published. However, these are also captured: for these queues to have 
	finite length it is necessary for the `average' prices to satisfy 
	\ref{eq:expected_size}.
\end{remark}
Of course, for any load $n$, $F$ there are many blockchains $(\vecc B, \vecc d, 
\vecc p)$  satisfying 
Definition~\ref{def:compatible} (e.g., any with prices that are high enough to 
exclude all transactions). The following definition captures how EIP-1559 style 
price updates would translate into our steady-state model for a tiered 
mechanism.
\begin{definition}[EIP-1559 Stable Blockchain]\label{def:stable_prices}
	A blockchain $(\vecc B, \vecc d, \vecc p)$ is \emph{stable} for load $n, F$ if 
	it is compatible (recall Definition~\ref{def:compatible}) and for every 
	$j\in[k]$:
	\begin{equation}\label{eq:tight_prices}
		\frac{\expect{T_j}}{B_j} < 1\Rightarrow p_j = 0.
	\end{equation}	
\end{definition}
At first glance, it might seem as if Ethereum itself does not satisfy this 
definition. However, notice that for Ethereum, the target tier size (which is just 
the tier size in our notation) is not the physical maximum block size, but rather 
the target of having blocks be $50\%$ full. Under that prism, as long as the 
prices lead to low demand, they will keep decreasing until the eventually reach 
0, if necessary.

Given these notions of stability, we show that the previous distributional 
assumptions are indeed necessary.
\begin{observation}\label{obs:no_price}
	Suppose that $B_1 = B_2 = 1, d_1 = 0, d_2 = 1, n = 3$ and for $F$:
	\begin{itemize}
		\item With probability 1/3, $v_0$ is distributed uniformly in $[0,1]$ for 
		delay less 
		than $1$ and $0$ otherwise.
		\item With probability 2/3, $v_0$ is uniform in $[0,1]$ for any 
		delay.
	\end{itemize}
	If $p_1 \neq p_2$ then it has to be $p_1 > p_2$. Otherwise, all transactions 
	will 
	chose $B_1$, leaving $B_2$ empty. By Definition~\ref{def:stable_prices} 
	$p_2 = 0$, leading to a contradiction since we assumed $p_1 < p_2 = 0$ and 
	$p_1$ has to be non-negative. This leaves us with two options:
	\begin{itemize}
		\item $p_1 > p_2$. In this case, all transactions with $d=2$ will choose 
		to pay $p_2$. To achieve stability, we need that 		$p_2 > 1/2$. But then 
		$p_1 > p_2 > 1/2$ is too high to fill $B_1$.
		\item $p_1 = p_2$. In this case, $B_1$ will have higher demand than 
		$B_2$, as it is selected by transactions with $d=1$ and half of the 
		transactions with $d = 2$. Since $B_1 = B_2$, it is impossible to have prices 
		$p_1 = p_2$ that equally fill both blocks. Since the block that is not totally 
		filled needs to have a price equal to 0, $p_1 = p_2 > 0$ is not a possible 
		solution.
	\end{itemize}
\end{observation}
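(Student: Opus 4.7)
The plan is to rule out the existence of a stable price vector $(p_1,p_2)$ by a trichotomy on the sign of $p_1-p_2$, and in each case compute $\mathbb{E}[T_1]$, $\mathbb{E}[T_2]$ from first principles using the fact that $n=3$ and that a transaction with constant valuation (the two-thirds type) is indifferent between delays, while a transaction with drop-to-zero valuation (the one-third type) must use tier $1$ to extract any value. Stability then couples the two tier fill levels to the two prices through the rules: each expected fill is at most $B_j = 1$, and whenever the inequality is strict the corresponding price is forced to $0$.

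First, I would dispense with the case $p_1 < p_2$: since $d_1 < d_2$, tier $1$ strictly dominates tier $2$ for every transaction, so $\mathbb{E}[T_2] = 0$, forcing $p_2 = 0$ and hence $p_1 < 0$, a contradiction.

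Next, for $p_1 > p_2$, I would observe that the two-thirds type (whose value equals $v_0$ at both delays) strictly prefers the cheaper tier $2$, contributing $3 \cdot \tfrac{2}{3}\cdot(1-p_2) = 2(1-p_2)$ to $\mathbb{E}[T_2]$; compatibility therefore forces $p_2 \ge 1/2$. Simultaneously the only source of demand for tier $1$ is the one-third type, yielding $\mathbb{E}[T_1] = 1 - p_1 < 1$, so stability forces $p_1 = 0$, contradicting $p_1 > p_2 \ge 1/2$. Finally, for $p_1 = p_2$, the two-thirds type is indifferent between the two active tiers and, by the symmetric tie-breaking rule and $B_1 = B_2$, splits evenly, while the one-third type goes only to tier $1$; this gives $\mathbb{E}[T_1] = 2(1-p_1)$ and $\mathbb{E}[T_2] = 1 - p_1$. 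Compatibility of tier $1$ then demands $p_1 \ge 1/2$, but stability of tier $2$ forces $p_1 = p_2 = 0$, again a contradiction.

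The main subtlety I expect is in the middle case $p_1 > p_2$: one has to notice that tier $1$ is essentially abandoned by the bulk (two-thirds) of demand as soon as tier $2$ is strictly cheaper, so any positive $p_1$ immediately under-fills tier $1$; this is precisely the pathology created by the absence of a strictly decreasing discount factor on the constant-valuation mass, and it is what the regularity assumption in \Cref{def:demand_regular} is designed to prevent.
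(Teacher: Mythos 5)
Your proposal is correct and takes essentially the same route as the paper's own argument: the identical trichotomy on the sign of $p_1 - p_2$, with the same demand calculations (tier~$2$ absorbing the constant-valuation mass and forcing $p_2 \ge 1/2$ when $p_1 > p_2$; the $2(1-p)$ versus $1-p$ split under equal prices) feeding into compatibility and EIP-1559 stability to yield a contradiction in each case. The only difference is that you make the expected-demand arithmetic and the final stability step explicit where the paper leaves them implicit, which is a harmless refinement rather than a distinct approach.
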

\subsection{Properties of Compatible and EIP-1559 Stable blockchains}
We need to show that the previously described tiered pricing mechanism does 
indeed converge to set of steady-state tier parameters. This requires laying some theoretical 
groundwork, before proceeding to establish that:
\begin{itemize}
	\item For fixed tier sizes and delays we can always find compatible and 
	EIP-1559 stable prices.
	\item These prices are unique.
	\item For fixed tier sizes, we can always find delays and prices implementing 
	a desired policy.
\end{itemize} 

The first step, also serving as a `sanity' check, for this model is all about 
Definitions \ref{def:compatible} and \ref{def:stable_prices}. Do blockchains 
such as Ethereum satisfy our definitions? The 
answer is positive. In Ethereum's case every parameter other than the price is 
set: there is just one tier with minimal delay.

\begin{proposition}
	Ethereum with EIP-1559 is compatible with any load. In particular, for any 
	$B$ and $n,F$ 
	there exists price $\vecc p$ such that $(\vecc B, \vecc d, \vecc p)$ is 
	compatible with $n, F$.
\end{proposition}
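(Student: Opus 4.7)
The plan is to verify Definition~\ref{def:compatible} directly for the single-tier blockchain corresponding to EIP-1559 and then exhibit an explicit price $p$ making the expected demand fit within capacity. First, I would specialize the setup: under EIP-1559 there is exactly one ``real'' tier with $B_1 = B$ and $d_1 = 1$, together with the reject tier $0$. So the compatibility condition collapses to the single inequality $\expect{T_1} \le B$, and my task reduces to producing some $p_1 \ge 0$ for which this holds.

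Next I would compute $\expect{T_1}$ explicitly in terms of $p_1$. Since there is only one paying tier, the utility-maximizing choice of each transaction $t_i$ is tier $1$ whenever $v_i(1) - p_1 \ge 0$ and tier $0$ otherwise, with no ties to break almost surely under demand regularity. Therefore
\[
\expect{T_1} \;=\; n \cdot \Prob{v \sim F}{v(1) \ge p_1} \;=:\; \phi(p_1).
\]
By \Cref{lemma:continuity-in-p-d}, $\phi$ is continuous in $p_1$.

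Finally, to produce the required $p_1$, I would use the bounded support furnished by demand regularity: the valuations have the form $v(d\,|\,v_0,h) = v_0 \cdot h(d)$ with $v_0 \in [0,1]$ and $h(d) \in [0,1]$, so $v(1) \le 1$ almost surely. Hence any $p_1 > 1$ forces $\phi(p_1) = 0 \le B$, establishing compatibility. For a slightly sharper statement one can instead invoke the intermediate value theorem on $\phi$: as $p_1 \downarrow 0$ we have $\phi(0) = n > B$ by the congestion assumption, and as $p_1 \uparrow \infty$ we have $\phi(p_1) \to 0$, so by continuity there is a $p_1^\star$ with $\phi(p_1^\star) = B$.

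The main (and only) obstacle is essentially bookkeeping: making sure the tie-breaking rule in the definition of $X_{ij}$ does not cause trouble, which is handled by demand regularity ensuring the set of values with $v(1) = p_1$ has measure zero, and checking that the continuity lemma applies at the chosen $p_1$. Beyond that, the claim follows immediately from the boundedness of valuations implied by \Cref{def:demand_regular}.
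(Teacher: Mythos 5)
Your proposal is correct and takes essentially the same route as the paper: the paper's proof likewise observes that $\expect{T_1} = n$ at $p_1 = 0$, that $\expect{T_1} = 0$ once $p_1 > 1$ (valuations being bounded by $1$), and invokes the continuity of $\expect{T_1}$ in $p_1$ from \Cref{lemma:continuity-in-p-d} to conclude via the intermediate value theorem that some price achieves $\expect{T_1} = B$. Your additional shortcut---that any $p_1 > 1$ already yields $\expect{T_1} = 0 \le B$ and hence compatibility outright---is also valid since \Cref{def:compatible} only demands an inequality (the paper makes this same remark in the text following that definition), so both of your arguments go through.
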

\begin{proof}
	For EIP-1559 we have that $B = B_1$ and $d_1 = 0$.  Clearly, for $p_1 = 0$ 
	we have that $\expect{T_1} = n$, by  Definition~\ref{def:demand_regular}. 
	Since for $p_1 > 1$ we have that $\expect{T_1} = 0$ (since the utility of any 
	transaction is at most 1) and $\expect{T_1}$ 
	is continuous in $p_1$, there exists some $p_1 > 0$ such that $\expect{T_j} = 
	B < n$, satisfying Definition~\ref{def:compatible}.
\end{proof}

This is a good time to highlight that for tiered blockchains, the combination of 
compatibility and EIP-1559 stability implies a more intuitive property.
\begin{proposition}\label{prop:all_tiers_full}
	For any blockchain $(\vecc B, \vecc d, \vecc p)$ that is compatible and 
	EIP-1559 stable with demand $n, F$, it holds that:
	$$
		\expect{T_j} = B_j
	$$
	for all $j \in [k]$.
\end{proposition}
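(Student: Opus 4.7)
The plan is to argue by contradiction: assume some tier $j \in [k]$ has $\expect{T_j} < B_j$ (recall that compatibility gives $\expect{T_j} \le B_j$ automatically), and derive a violation of the sustained congestion assumption $n > B$. The whole argument is driven by the contrapositive of \eqref{eq:tight_prices} in Definition~\ref{def:stable_prices}.

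First I would invoke EIP-1559 stability: the strict inequality $\expect{T_j}/B_j < 1$ forces $p_j = 0$. This is the only place stability enters; the remainder of the argument only needs compatibility and the regularity of $F$. The key observation is then that a zero-priced tier makes exclusion strictly suboptimal for almost every transaction. Concretely, under Definition~\ref{def:demand_regular}, $h(d_j) > 0$ for the finite delay $d_j$, and the positive marginal density $f(v_0 \mid h) > 0$ on $[0,1]$ gives $\Pr[v_{0,i} = 0] = 0$. So for almost every transaction $t_i$,
\[
u(i, j) \;=\; v_{0,i}\, h(d_j) - 0 \;>\; 0 \;=\; u(i, 0),
\]
and therefore $0 \notin D_i$. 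By the definition of $X_{i,0}$, this implies $X_{i,0} = 0$ almost surely, so $\expect{T_0} = 0$.

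Combining this with \eqref{eq:expected_size_rejected} yields $\sum_{j' \in [k]} \expect{T_{j'}} = n$. On the other hand, summing compatibility over all $j' \in [k]$ gives
\[
\sum_{j' \in [k]} \expect{T_{j'}} \;\le\; \sum_{j' \in [k]} B_{j'} \;=\; B.
\]
Hence $n \le B$, contradicting the standing assumption $n > B$. This rules out any tier with $\expect{T_j} < B_j$, and together with compatibility forces $\expect{T_j} = B_j$ for every $j \in [k]$.

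The one subtle point — and the only place where something beyond bookkeeping is needed — is justifying that \emph{every} transaction (not just a positive fraction) strictly prefers tier $j$ to exclusion. This relies on two regularity ingredients working together: $h(d_j)>0$ at any finite delay, and the absence of a point mass of $F$ at $v_0=0$. If either failed, a positive fraction of transactions could be indifferent between tier $j$ and tier $0$, and the clean accounting $\expect{T_0}=0$ would break.
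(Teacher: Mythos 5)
Your proof is correct and takes essentially the same route as the paper's: EIP-1559 stability forces $p_j=0$ on any underfull tier, a zero-priced tier makes exclusion (almost surely) strictly suboptimal so no transaction lands in tier $0$, and the resulting total demand $n$ exceeds the capacity $B$, contradicting compatibility. The only cosmetic difference is the final accounting --- the paper lower-bounds the demand of the underfull tier itself by $n-(B-B_j)>B_j$, whereas you sum over all tiers via $\expect{T_0}=0$ and contradict $n>B$ directly; your write-up is, if anything, slightly more careful about the measure-zero event $v_0=0$ (though note that existence of the density already rules out an atom at $0$; strict positivity of $f(v_0\,|\,h)$ is not needed for that step).
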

\begin{proof}
	If for some tier $j$ it was $\expect{T_j} < B_j$, which is possible by 
	compatibility, EIP-1559 stability would imply that the price of that tier is 
	zero. However, in this case any transaction would have positive utility from 
	joining that tier. Since $n > \sum_{j \in [k]} B_j$ this tier would have demand 
	at least $n - (\sum_{\ell \in [k]} B_{\ell} - B_j) > B_j$ and therefore would not 
	be compatible.
\end{proof}
In addition, we show the simple fact that for high prices, the demand is zero.
\begin{lemma}
	For any blockchain $(\vecc B, \vecc d, \vecc p)$ and regular demand $n, F$,  
	if $p_j = 1$ then $\expect{T_j} = 0$ 
\end{lemma}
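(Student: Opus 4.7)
The plan is to exploit two facts in tandem: the boundedness of valuations guaranteed by demand regularity (so that $p_j = 1$ sits at the top of the value range), together with the continuity of the marginal distribution of $v_0$ (so that the boundary has measure zero).

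First I would unpack the structure of utility. Fix tier $j$ with $p_j = 1$ and fix an arbitrary transaction $t_i$ drawn from $F$, with valuation parameters $(v_0, h)$. By demand regularity, $v_i(d_j) = v_0 \cdot h(d_j)$ with $v_0 \in [0,1]$ and $h(d_j) \in [0,1]$. Hence
\[
u(i, j) \;=\; v_0 \cdot h(d_j) - 1 \;\le\; 0,
\]
with equality only when $v_0 = 1$ \emph{and} $h(d_j) = 1$.

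Next I would argue that this equality case has probability zero. Conditioning on the discount factor $h$ (in particular on the value $h(d_j)$), demand regularity guarantees that $v_0$ has a probability density function $f(\cdot \mid h)$, and hence $\Pr[v_0 = 1 \mid h] = 0$. Integrating over $h$ yields $\Pr[v_0 \cdot h(d_j) \ge 1] = 0$, so $u(i, j) < 0$ almost surely.

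Finally, I would combine this with the tie-breaking rule to conclude. Since tier $0$ always yields utility $u(i, 0) = 0$, whenever $u(i, j) < 0$ tier $j$ fails the maximality condition defining $D_i$, so $j \notin D_i$ and $X_{ij} = 0$. Because this happens almost surely, $\mathbb{E}[X_{ij}] = 0$ for every $i$, and therefore $\mathbb{E}[T_j] = \sum_{i \in [n]} \mathbb{E}[X_{ij}] = 0$, as required. The only step requiring any care is the second one, where the pdf assumption of demand regularity must be used to rule out the boundary case $v_0 = h(d_j) = 1$; everything else is bookkeeping with the definitions of $u$, $D_i$, and $T_j$.
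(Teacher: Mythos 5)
Your proof is correct and follows essentially the same route as the paper's: both reduce membership in tier $j$ to the non-negativity of $u(i,j)=v_0\cdot h(d_j)-p_j$, bound this by $v_0\ge p_j=1$ using $h(d_j)\le 1$, and invoke the demand-regularity assumptions ($v_0\in[0,1]$ with a density, hence no point mass at $1$) to conclude $\expect{T_j}=0$. Your explicit conditioning on $h$ and the appeal to the tie-breaking definition of $D_i$ are just slightly more detailed bookkeeping of the same argument.
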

\begin{proof}
	\begin{align*}
		\expect{T_j} = n \cdot \prob{t \in T_j} &\le n \cdot \prob{v_0 \cdot h(d_j) - 
			p_j \ge 0}\\
		&\le n \cdot \prob{v_0 \ge p_j}\\
		&= 0,
	\end{align*}
	where the first inequality follows because for any transaction to enter tier $j$ 
	it's utility needs to be non-negative and the last equality by the last two 
	distributional assumptions (i.e., that $v_0$ is distributed in $[0,1]$ and has 
	zero probability of being equal to $1$).
\end{proof}
And a complementary result for low prices.
\begin{lemma}
	For any blockchain $(\vecc B, \vecc d, \vecc p)$ and regular demand $n, F$,  
	if $p_j = 0$ then $\sum_{j \in [k]}\expect{T_j} < B$ 
\end{lemma}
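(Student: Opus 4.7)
The plan is to leverage the monotonicity conventions $p_1>p_2>\cdots>p_k$ and $d_1<d_2<\cdots<d_k$ together with the demand-regular assumptions to show that when a tier carries zero price, demand for all slower tiers collapses and the surviving demand cannot fill the full throughput $B$. The structure mirrors the preceding lemma for $p_j=1$, exchanging one extreme of the price range for the other.

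The first step fixes $j$ with $p_j=0$ and examines its interaction with the slower tiers. Since prices are nonnegative and decreasing in the tier index, $p_\ell=0$ for every $\ell\ge j$; and since delays are strictly increasing and $h$ is strictly decreasing, each zero-priced tier $\ell>j$ yields strictly lower utility than tier $j$ at every positive $v_0$. Combined with $f(v_0\mid h)>0$ on $[0,1]$ from Definition~\ref{def:demand_regular}, this forces $D_i\cap\{j+1,\ldots,k\}=\emptyset$ almost surely, so $\expect{T_\ell}=0$ for all $\ell>j$. The sum in question thus reduces to $\sum_{\ell=1}^{j}\expect{T_\ell}$, concentrated entirely in the faster tiers.

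The second step bounds this restricted sum by the aggregate capacity $\sum_{\ell=1}^{j}B_\ell$ of the faster tiers. The intended bookkeeping is that transactions whose utility-maximizing choice is tier $j$ but which cannot be retained (because the raw count exceeds $B_j$) spill into the rejection tier $T_0$, so the expected demand actually served in tiers $1,\ldots,j$ is bounded by their combined sizes, while tiers $j+1,\ldots,k$ contribute nothing. In the nontrivial case $j<k$ this gives $\sum_{\ell\in[k]}\expect{T_\ell}\le\sum_{\ell=1}^{j}B_\ell<B$, which is the claim.

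The hard part will be making this capacity-accounting step rigorous from the literal definition $T_j=\sum_i X_{ij}$, which does not cap demand at $B_j$: in principle an incompatible blockchain has $\expect{T_j}>B_j$, which would send $\sum_{\ell\in[k]}\expect{T_\ell}$ upward toward $n$ rather than below $B$. I would resolve this by restricting attention to compatible blockchains (as in Proposition~\ref{prop:all_tiers_full}), so that $\expect{T_\ell}\le B_\ell$ for every $\ell$ is available and the chain of inequalities above goes through cleanly; the continuity of $\expect{T_\ell}$ in prices from Lemma~\ref{lemma:continuity-in-p-d} then upgrades the weak inequality to the strict one required by the statement, completing the complement to the $p_j=1$ lemma.
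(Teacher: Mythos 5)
There is a genuine gap, and it starts before the proof does: the inequality you set out to prove is the opposite of what the paper establishes and uses. The ``$<$'' in the statement is evidently a sign typo: the paper's own proof concludes that if $p_j = 0$ then \emph{every} transaction is almost surely included in some tier, so $\sum_{j \in [k]}\expect{T_j} = n > B$, and this is precisely the form in which the lemma is consumed as property 3 of \Cref{lemma:technical_existence} (``if $x_j = 0$ for some $j$ then $\sum_{j\in[k]} f_j(\vecc x) > B$''), where it certifies that zero-price vectors are \emph{incompatible} and thereby drives the existence-of-prices argument. The intended one-line proof: with $p_j = 0$, a transaction's utility in tier $j$ is $v_0 \cdot h(d_j)$, which is strictly positive almost surely ($h(d_j) > 0$ for any finite delay since $h$ is strictly decreasing with limit $0$, and $\prob{v_0 = 0} = 0$ by the positive density in \Cref{def:demand_regular}); hence almost surely the utility-maximizing tier of every transaction lies in $[k]$ rather than tier $0$, and the total demand is $n$, which exceeds $B$ by the congestion assumption. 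You instead tried to engineer demand \emph{below} $B$, which no argument can deliver from these hypotheses.

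Even granting the misread target, the route has three independent failures. First, the monotonicity $p_1 \ge p_2 \ge \cdots \ge p_k$ you lean on is not a standing convention: the paper proves it only for EIP-1559 \emph{stable} price vectors in a separate lemma, whereas this statement quantifies over arbitrary $\vecc p$, so you cannot conclude $p_\ell = 0$ for $\ell \ge j$. Second, your ``capacity accounting'' misreads the model: $T_j = \sum_i X_{ij}$ records uncapped utility-maximizing demand, and tier $0$ collects only transactions with non-positive utility everywhere --- there is no rationing mechanism spilling excess demand from a full tier into $T_0$. Your fix of assuming compatibility changes the statement and is circular with respect to its use: under compatibility, $\sum_j \expect{T_j} \le B$ is immediate by summing $\expect{T_j} \le B_j$ (making the lemma vacuous), and the lemma's whole purpose in \Cref{lemma:technical_existence} is to rule zero prices \emph{out} of the compatible set. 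Third, the closing appeal to \Cref{lemma:continuity-in-p-d} to ``upgrade'' $\le$ to $<$ is a non sequitur: continuity of $\expect{T_j}$ in $\vecc p$ cannot convert a weak inequality at a point into a strict one.
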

\begin{proof}
	If $p_j = 0$ then almost surely every transaction would have positive utility 
	in tier $j$, therefore would be included in some tier. The proof follows given 
	that the arrival rate is higher then the throughput: $n > B$.
\end{proof}
\begin{lemma}
	For any blockchain $(\vecc B, \vecc d, \vecc p)$ and demand $n, F$ where 
	$\vecc p$ is EIP-1559 stable, we have that 
	$$	p_j \ge p_{j+1} $$
	for all $1 \le j < k$.
\end{lemma}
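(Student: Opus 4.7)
The plan is to argue by contradiction. Suppose toward a contradiction that $p_{j+1} > p_j$ for some $1 \le j < k$, and recall that by construction $d_{j+1} > d_j$.

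First I would show that under this hypothesis, tier $j+1$ is \emph{strictly dominated} by tier $j$ for almost every realization of the valuation. Fix a transaction with parameters $(v_0, h)$ drawn from $F$. Since $h$ is strictly decreasing (\Cref{def:demand_regular}), $v_0 \cdot h(d_j) \ge v_0 \cdot h(d_{j+1})$, with equality only when $v_0 = 0$. Combined with $p_j < p_{j+1}$, this gives
\[
    v_0 h(d_j) - p_j \;>\; v_0 h(d_{j+1}) - p_{j+1}
\]
whenever $v_0 > 0$. By \Cref{def:demand_regular}, the marginal density $f(v_0 \mid h)$ is well defined on $[0,1]$, so the event $\{v_0 = 0\}$ has measure zero. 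Hence with probability one, tier $j$ yields strictly greater utility than tier $j+1$, so $j+1 \notin D_i$ almost surely.

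Next I would translate this into a statement about expected demand. Since $j+1 \notin D_i$ almost surely, the indicator $X_{i, j+1}$ is zero almost surely for every $i$, and therefore
\[
    \Expectation[T_{j+1}] \;=\; \sum_{i \in [n]} \Expectation[X_{i, j+1}] \;=\; 0 \;<\; B_{j+1}.
\]
Now I would invoke the EIP-1559 stability property (\Cref{def:stable_prices}, \Cref{eq:tight_prices}): whenever $\Expectation[T_{j+1}]/B_{j+1} < 1$, the price $p_{j+1}$ must equal $0$. But $p_{j+1} = 0$ directly contradicts the standing assumption $p_{j+1} > p_j \ge 0$, completing the contradiction.

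There is no real obstacle in this argument; the only point requiring a little care is the measure-theoretic handling of the borderline valuation $v_0 = 0$, which is resolved by the positive-density hypothesis on $f(v_0 \mid h)$. Everything else is a direct consequence of the monotonicity built into the value functions, the definition of the utility-maximizing set $D_i$, and the stability rule that forces underutilized tiers to have zero price.
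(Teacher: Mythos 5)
Your proof is correct and follows essentially the same route as the paper's: tier $j+1$ is strictly dominated by tier $j$, hence empty in expectation, and EIP-1559 stability then forces $p_{j+1}=0$, contradicting $p_{j+1} > p_j \ge 0$. The only difference is your measure-theoretic care at $v_0 = 0$, which is in fact unnecessary (though harmless): the strictness of $v_0 h(d_j) - p_j > v_0 h(d_{j+1}) - p_{j+1}$ comes from the price gap $p_j < p_{j+1}$ alone, so the inequality holds for \emph{every} $v_0 \ge 0$, and tier $j+1$ is deterministically excluded from each $D_i$.
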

\begin{proof}
	Suppose that for some $1\le j < k$ we have that $p_j < p_{j+1}$. In this case, 
	tier $j$ would have lower price and (by definition ) lower delay that tier 
	$j+1$. As such, by \Cref{def:trx_utility} we have that $u(i, j) > u(i,j + 1)$ for 
	any transaction $i$, leading to $T_{j+1} = \emptyset$. However, $p_{j+1} > 
	p_{j} \ge 0$, leading to a contradiction as an empty tier must necessarily 
	have zero price.
\end{proof}
\begin{remark}
	The previous result only holds if the price vector is EIP-1559. Compatibility 
	alone is not enough, as tier $j$ would be empty but it could be the case that 
	$T_j \le B_j$ if the $p_j$ is still high enough.
\end{remark}

We are now ready to prove our first result about the existence of prices. To 
avoid the cumbersome blockchain notation, we prove the following technical 
lemma about a continuous function $f$ which exhibits all salient properties 
of the expected demands $\expect{T_j}$ relative to the prices $\vecc p$, as 
proven with the previous lemmas.
Essentially, $f$ can be though of as the map from $\vecc p$ to $(\expect{T_1}, 
\expect{T_2}, \ldots, \expect{T_k})$.  The third property follows immediately: 
if a price of some tier increases, its demand decreases and the demand for other 
tiers cannot decrease.
\begin{lemma}\label{lemma:technical_existence}
	Let $f : [0, 1]^k \rightarrow \mathbb{R}_{\ge 0}^k$ be a function with the 
	following properties:
	\begin{enumerate}
		\item $f(x_1, x_2, \ldots, x_k)$ is uniformly continuous in $\vecc x$ and 
		decreasing 
		in $x_j$ for $j \in [k]$.
		\item If $x_j = 1$, then $f_j(\vecc x) = 0$, where $f_j$ is the $j-th$ 
		coordinate of $f(\vecc x)$.
		\item If $x_j = 0$ for some $j \in [k]$ then:
		$$
			\sum_{j \in [k]} f_j(\vecc x) > B.
		$$
		\item For $\vecc y = \vecc x + \delta \cdot \vecc e_j$ such that $\vecc x, 
		\vecc y < 1$ we have that $f_j(\vecc y) < f_j(\vecc x)$ and $f_\ell(\vecc y) 
		\ge f_\ell(\vecc x)$ for $\ell \neq j$.
	\end{enumerate}
	Then, for any $B_1, B_2, \ldots, B_k > 0$ such that $\sum_{j \in [k]} B_j = B$ 
	there exists some $\vecc x \in [0, 1]^n$ such that:
	$$
		f_j(\vecc x) = B_j > 0
	$$
	for all $j \in [k]$.
\end{lemma}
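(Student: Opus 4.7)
The plan is to prove the lemma via Brouwer's fixed-point theorem applied to a continuous self-map of $[0,1]^k$ obtained by a ``tatonnement-style'' price adjustment. Concretely, I would fix a constant $C>0$ (any value sufficiently large to bound the excess demand works, e.g. $C > \sup\lVert f\rVert + B$) and define $\Phi:[0,1]^k\to[0,1]^k$ coordinatewise by
\[
\Phi_j(\vecc x) \;=\; \min\!\left(1,\; \max\!\left(0,\; x_j + \tfrac{1}{C}\bigl(f_j(\vecc x) - B_j\bigr)\right)\right).
\]
Since $f$ is continuous (property 1) and $\min,\max$ are continuous, $\Phi$ is a continuous self-map of the compact convex cube $[0,1]^k$, so Brouwer gives a fixed point $\vecc x^*$.

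The heart of the argument is then analyzing this fixed point. For each $j$, one of three cases holds at $\vecc x^*$: (i) $x^*_j \in (0,1)$, in which case the clipping is inactive and $\Phi_j(\vecc x^*)=x^*_j$ forces $f_j(\vecc x^*)=B_j$; (ii) $x^*_j = 1$, which forces $f_j(\vecc x^*)\geq B_j$, but by property 2 we have $f_j(\vecc x^*)=0$, contradicting $B_j>0$; (iii) $x^*_j = 0$, which forces $f_j(\vecc x^*)\leq 0$, hence $f_j(\vecc x^*)=0$ (since $f$ is $\mathbb{R}_{\geq 0}$-valued). So the only remaining obstacle is to rule out case (iii). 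Let $I=\{j:x^*_j=0\}$; on $I$ we have $f_j(\vecc x^*)=0$ and off $I$ we have $f_j(\vecc x^*)=B_j$ (case (ii) was eliminated). If $I\neq\emptyset$, property 3 yields
\[
\sum_{j\in[k]} f_j(\vecc x^*) \;>\; B \;=\; \sum_{j\in[k]} B_j,
\]
but the left-hand side equals $\sum_{j\notin I} B_j \leq B$, a contradiction. Hence $I=\emptyset$, every coordinate of $\vecc x^*$ lies in $(0,1)$, and $f_j(\vecc x^*)=B_j$ for all $j\in[k]$, which is exactly the conclusion.

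The potential obstacle I see is making sure the constant $C$ is chosen so that $\Phi$ genuinely stays in $[0,1]^k$ without the clipping absorbing ``too many'' fixed points spuriously; however, this is handled by the case analysis above, which works for any $C>0$ (the clipping is ultimately benign because properties 2 and 3 force the fixed point into the interior anyway). Note that properties 1 and 4's strict monotonicity are not needed here and will play their role in the companion uniqueness argument, not in existence; only continuity of $f$ and the boundary behavior encoded in properties 2 and 3 drive this proof.
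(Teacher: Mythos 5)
Your proposal is correct in substance but takes a genuinely different route from the paper. The paper's proof is constructive: starting from $\vecc x_0 = \vecc 0$, it repeatedly picks the coordinate $\ell$ maximizing the excess $f_\ell(\vecc x_t) - B_\ell$ and raises that coordinate until $f_\ell(\vecc x_{t+1}) = B_\ell$ (such a point exists along the coordinate line by continuity and property 2), then uses the monotonicity in property 4 to maintain the invariant $f_j(\vecc x_{t'}) \ge B_j$ for all $j$ once every coordinate is positive, and finally invokes uniform continuity to show the monotone limit $\vecc x^\star$ satisfies $f_j(\vecc x^\star) = B_j$ exactly. Your Brouwer argument with the clipped tatonnement map $\Phi$ is shorter and non-constructive, and---as you correctly observe---it dispenses with the monotonicity hypotheses (property 4 and the decreasingness in property 1) altogether, using only continuity and the boundary behavior in properties 2 and 3; so it actually proves a slightly more general statement. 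What the paper's iteration buys in exchange is that it mirrors the mechanism's actual price-update dynamics, so it doubles as intuition for why those dynamics converge rather than merely certifying that a stable price vector exists.

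One local slip to repair: in your case (iii), the fixed-point condition at $x^*_j = 0$ reads $\min\!\left(1, \max\!\left(0, \tfrac{1}{C}\bigl(f_j(\vecc x^*) - B_j\bigr)\right)\right) = 0$, which forces $f_j(\vecc x^*) \le B_j$, \emph{not} $f_j(\vecc x^*) \le 0$, so you cannot conclude $f_j(\vecc x^*) = 0$ on $I$. Fortunately the contradiction survives with the weaker bound: if $I = \{j : x^*_j = 0\}$ is nonempty, property 3 gives $\sum_{j \in [k]} f_j(\vecc x^*) > B$, while $\sum_{j \in I} f_j(\vecc x^*) + \sum_{j \notin I} f_j(\vecc x^*) \le \sum_{j \in I} B_j + \sum_{j \notin I} B_j = B$, the same contradiction. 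With that correction your proof is complete.
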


Putting it all together, we obtain our first result.
\begin{theorem}\label{thm:existence_of_prices}
	For any blockchain with fixed $\vecc B$ and $\vecc d$ and any $n, F$, 
	there exists $\vecc p$ such that $(\vecc B, \vecc d, \vecc p)$ is compatible 
	with the load and EIP-1559 stable.
\end{theorem}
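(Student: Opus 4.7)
The plan is to obtain the desired prices by invoking Lemma \ref{lemma:technical_existence}, applied to the function $f : [0,1]^k \to \RR_{\ge 0}^k$ defined by $f(\vecc p) = (\expect{T_1(\vecc p)}, \ldots, \expect{T_k(\vecc p)})$, where the expectations are taken against the fixed load $n, F$ and the blockchain has sizes $\vecc B$, delays $\vecc d$, and variable prices $\vecc p$. Once the lemma produces some $\vecc p \in [0,1]^k$ with $f_j(\vecc p) = B_j$ for every $j \in [k]$, both conclusions follow at once: compatibility (Definition \ref{def:compatible}) holds because each tier is exactly full, and EIP-1559 stability (Definition \ref{def:stable_prices}) holds because the implication $\expect{T_j}/B_j < 1 \Rightarrow p_j = 0$ is vacuous when every ratio equals $1$.

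The substance of the proof is then verifying the four hypotheses of Lemma \ref{lemma:technical_existence} for this $f$. Three of them are essentially in hand. Uniform continuity (Hypothesis 1) follows from Lemma \ref{lemma:continuity-in-p-d} together with compactness of $[0,1]^k$. Hypothesis 2 (that $p_j = 1$ forces $f_j = 0$) is exactly the earlier lemma, which uses that $v_0$ has no point mass at $1$. Hypothesis 3 (that setting some $p_j = 0$ forces $\sum_\ell f_\ell > B$) is the complementary lemma showing that a free tier absorbs every transaction, so $\sum_\ell \expect{T_\ell} = n > B$ under the congestion assumption.

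The main obstacle is Hypothesis 4, the componentwise monotonicity. I would argue it by fixing a coordinate $j$, increasing $p_j$ to $p_j + \delta$ with all other prices held fixed, and analyzing how the per-transaction utility-maximizing set $D_i$ evolves. Since the perturbation strictly lowers $u(i,j)$ while leaving every $u(i,\ell)$ for $\ell \neq j$ untouched, any transaction that previously found some $\ell \neq j$ optimal still does so; hence $\expect{T_\ell}$ can only rise weakly, which yields the $\ell \neq j$ part of the hypothesis. For the strict drop of $f_j$ itself, the positive density of $v_0$ is crucial: the transactions whose utility in tier $j$ previously lay in the thin slab $[0,\delta)$ form a set of strictly positive probability mass that is driven out of tier $j$ by the perturbation, so $\expect{T_j}$ strictly decreases.

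The one subtlety worth handling explicitly is the randomized tie-breaking built into $X_{ij}$, since tiers can momentarily tie as $p_j$ varies; however, demand regularity assigns zero mass to ties between distinct tiers, so the tie set is a measure-zero exception that neither breaks the monotonicity nor the continuity inherited from Lemma \ref{lemma:continuity-in-p-d}. Assembling these four checks and invoking Lemma \ref{lemma:technical_existence} produces the desired price vector, and the opening observation then closes the argument.
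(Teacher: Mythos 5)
Your proposal is correct and is essentially the paper's own argument: the paper proves Theorem~\ref{thm:existence_of_prices} by exactly this assembly (``putting it all together''), applying Lemma~\ref{lemma:technical_existence} to the map $\vecc p \mapsto (\expect{T_1},\ldots,\expect{T_k})$, with the four hypotheses supplied by Lemma~\ref{lemma:continuity-in-p-d}, the two price lemmas (note the $p_j=0$ lemma's stated inequality is a typo for $\sum_j \expect{T_j} > B$, as you correctly read it), and the monotonicity observation. Your explicit treatment of strict monotonicity via the positive density of $v_0$ and of the measure-zero tie-breaking is in fact more careful than the paper's one-line justification of that hypothesis.
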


Knowing that compatible and EIP-1559 stable $\vecc p$ always exist for any 
$(\vecc B, \vecc d)$, we are almost ready to prove that these prices are indeed 
unique. As a warmup to be used in the main proof, we show that if all prices 
increase, the 
aggregate demand cannot increase as well.
\begin{lemma}\label{lemma:dominated_price}
	Given $\vecc B, \vecc d$ and $n, F$ and two different price vectors $\vecc p 
	\ge \vecc p'$ (i.e., each price in $\vecc p$ is equal or higher than the 
	corresponding in $\vecc p'$), we have that
	$$
	\sum_{j \in [k]} T_j \le \sum_{j \in [k]} T_j'
	$$
	for all $j \in [k]$.
\end{lemma}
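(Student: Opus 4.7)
Since each transaction is placed in exactly one tier, $\sum_{j\in[k]} T_j = n - T_0$, so the stated inequality is equivalent to $\expect{T_0'} \le \expect{T_0}$. I would couple the draws of $v_1,\ldots,v_n$ from $F$ across the two price regimes and reduce to a per-transaction comparison: show that for each transaction $i$ with realized valuation $v_i$, the conditional probability (over the tie-breaking randomness) that $i$ lands in some positive tier is non-decreasing when we switch from $\vecc p$ to $\vecc p' \le \vecc p$. Summing over $i$ and taking expectation over $v_i$ then yields the claim.

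\textbf{Key monotonicity.} Let $m(\vecc p) := \max_{j\in[k]}(v_i(d_j) - p_j)$. From the definition of $X_{ij}$, transaction $i$ is placed in a positive tier with conditional probability $1$ whenever $m(\vecc p) > 0$, with probability $0$ whenever $m(\vecc p) < 0$, and with some intermediate probability only on the boundary $\{m(\vecc p) = 0\}$. Because $\vecc p \ge \vecc p'$ coordinate-wise, $m$ itself is monotone: $m(\vecc p) \le m(\vecc p')$, so the event $\{m(\vecc p) > 0\}$ is contained in $\{m(\vecc p') > 0\}$. Away from ties, this immediately gives the desired pointwise inequality between the placement indicators.

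\textbf{Disposing of ties and concluding.} The only obstacle is the tie set $\{m(\vecc p) = 0\} \cup \{m(\vecc p') = 0\}$, where the random tie-breaking could in principle disrupt dominance. Here demand regularity (\Cref{def:demand_regular}) does the work: conditional on a discount function $h$, the event $m(\vecc p) = 0$ forces $v_0$ to equal the single value $\max_j p_j/h(d_j)$, which has probability zero under the density $f(\cdot \mid h)$; integrating out $h$ preserves this null set, and the same holds for $\vecc p'$. So almost surely
$$\indicator[i \text{ placed in a positive tier under } \vecc p] \le \indicator[i \text{ placed in a positive tier under } \vecc p'],$$
and taking $\Expectation_{v_i\sim F}$ and summing over $i$ finishes the proof.

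\textbf{Main obstacle.} As foreshadowed, the only real subtlety is the tie-breaking rule; everything else is a one-line monotonicity observation on $m$. The no-point-mass structure baked into demand regularity is exactly the assumption needed to make the boundary events irrelevant for expectations.
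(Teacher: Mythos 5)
Your proof is correct and takes essentially the same route as the paper's: the paper also reduces to the inclusion of the ``non-negative maximum utility'' event under $\vecc p$ inside the corresponding event under $\vecc p'\le \vecc p$ and multiplies the resulting probability inequality by $n$, with your per-transaction coupling and explicit tie-set analysis being a careful elaboration of a step the paper leaves implicit (it invokes \Cref{def:demand_regular} only to dismiss ties). One cosmetic slip: conditional on $h$, the event $m(\vecc p)=0$ pins $v_0$ to $\min_{j} p_j/h(d_j)$ rather than $\max_{j} p_j/h(d_j)$, but either way it is a single value and hence null under the density $f(\cdot \mid h)$, so your conclusion is unaffected.
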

\begin{proof}
	Let $S$ be the set of valuations that would lead to positive utility if included 
	in their favorite tier of blockchain $(\vecc B, \vecc d, \vecc p')$. Clearly the 
	utility of all of them would increase if they are included in $(\vecc B, \vecc d, 
	\vecc p)$ instead in the same tier. Defining $S'$ similarly, we have that
	$S' \subseteq S \Rightarrow \prob{S'} \le \prob{S}$. Since $S$ (and $S'$) 
	can be partitioned into disjoint sets $S_j$ based on the preferred tier of each 
	transaction such that $\prob{S_j} = T_j$ multiplying by $n$ gives us the 
	claimed result.
\end{proof}
Showing that the prices are unique builds upon and refines this lemma.
\begin{theorem}\label{thm:uniqueness_of_prices}
	For any blockchain with fixed $\vecc B$ and $\vecc d$ and any $n, F$, if 
	there exists $\vecc p$ such that if $(\vecc B, \vecc d, \vecc p)$ is compatible 
	with the load and EIP-1559 stable, then $\vecc p$ is unique.
\end{theorem}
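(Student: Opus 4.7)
The plan is to assume for contradiction that two distinct price vectors $\vecc p\neq\vecc p'$ both make $(\vecc B,\vecc d,\cdot)$ compatible and EIP-1559 stable, and derive a contradiction by combining a revealed-preference identity with a connectivity argument on the induced tier partitions. By \Cref{prop:all_tiers_full}, both vectors are tier-full, $\Expectation[T_j(\vecc p)]=\Expectation[T_j(\vecc p')]=B_j$ for every $j\in[k]$, and the preceding observation that $p_j=0$ would force $\sum_j\Expectation[T_j]<B$ rules out zero coordinates, so $p_j,p_j'>0$ throughout. Set $\delta_j:=p_j-p_j'$ for $j\in[k]$ and $\delta_0:=0$, and for each realized valuation $v$ let $\pi(v),\pi'(v)$ denote the tier it selects under $\vecc p$ and $\vecc p'$. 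Adding the two revealed-preference inequalities $u(v,\pi(v)\,|\,\vecc p)\geq u(v,\pi'(v)\,|\,\vecc p)$ and $u(v,\pi'(v)\,|\,\vecc p')\geq u(v,\pi(v)\,|\,\vecc p')$ cancels the $v_0 h(d_j)$ terms and yields the pointwise bound $\delta_{\pi'(v)}\geq\delta_{\pi(v)}$; by the positive-density assumption of \Cref{def:demand_regular} both original inequalities are a.s.\ strict whenever $\pi(v)\neq\pi'(v)$, so this bound becomes strict on that event.

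Next, tier-fullness gives $\Expectation[\delta_{\pi(v)}]=\tfrac{1}{n}\sum_{j=1}^k\delta_j B_j=\Expectation[\delta_{\pi'(v)}]$, and together with the pointwise strict inequality on $\{\pi\neq\pi'\}$ this forces $\pi(v)=\pi'(v)$ almost surely. To convert this equality of choice functions into $\vecc p=\vecc p'$, fix $h$ in the support of $F$: since $h$ is strictly decreasing and the utilities $v_0 h(d_j)-p_j$ are linear in $v_0$, the map $v_0\mapsto\pi(v_0,h)$ is piecewise constant and partitions $[0,1]$ into consecutive intervals forming a chain rejection$\,-\,j_1\,-\,\cdots\,-\,j_m$ (rejection appears because all prices are strictly positive). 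By Fubini and positive density of $v_0$, these partitions must coincide interval-by-interval under $\vecc p$ and $\vecc p'$ for a.e.\ $h$. Matching the rejection threshold $v_0=p_{j_1}/h(d_{j_1})$ gives $\delta_{j_1}=0$, while matching each consecutive threshold $v_0=(p_{j_\ell}-p_{j_{\ell+1}})/(h(d_{j_\ell})-h(d_{j_{\ell+1}}))$ gives $\delta_{j_\ell}=\delta_{j_{\ell+1}}$. Since $B_j>0$ for every tier $j$ forces $j$ to appear in the chain for a positive-measure set of $h$'s, the union of all chains (glued at the rejection node) is a connected graph of $\delta$-equalities containing the anchor $\delta_0=0$, so transitivity yields $\delta_j=0$ for every $j$, contradicting $\vecc p\neq\vecc p'$.

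The main obstacle is this last chain-connectivity step: one must justify that a.s.\ equality of the choice functions actually forces the partition breakpoints to agree for a.e.\ $h$ (not merely up to null sets), and that every tier really does appear in some chain so the transitive closure over $h$ links each tier back to the rejection anchor. Both ingredients rely essentially on the positive-density assumption of \Cref{def:demand_regular} together with the strict positivity $B_j>0$ guaranteed by the setup.
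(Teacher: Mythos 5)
Your proof is correct, but it takes a genuinely different route from the paper's. The paper argues by splitting on the sign pattern of $\vecc p' - \vecc p$: when every coordinate strictly increases, it shifts all prices uniformly by $\delta = \min_j (p'_j - p_j)$, invokes the demand-monotonicity lemma (\Cref{lemma:dominated_price}) to sandwich total demand, and then shows via the per-$h$ inclusion threshold $V_h$ that a uniform positive shift strictly shrinks the included mass (strictness coming from $f(v_0\,|\,h)>0$), a contradiction; the mixed-sign case $P^- \neq \emptyset$ is only sketched there. You instead run a single unified argument with no case split: summing the two revealed-preference inequalities is exactly a cyclic-monotonicity cancellation, and combined with tier-fullness under both vectors (\Cref{prop:all_tiers_full}) it forces the two choice functions to agree almost surely; you then recover $\vecc p = \vecc p'$ from the allocation by matching the per-$h$ breakpoints of the upper envelope of the lines $v_0 h(d_j) - p_j$, anchored at the rejection threshold, which is nondegenerate precisely because compatibility plus fullness rules out zero prices. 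This buys two things the paper's proof does not give directly: it treats arbitrary sign patterns of $\vecc p - \vecc p'$ uniformly, so nothing is left as a sketch, and it isolates the stronger intermediate fact that any two stable price vectors would induce the same allocation, after which the prices are identified from the allocation alone (it also dispenses with \Cref{lemma:dominated_price} entirely). The caveats you flag at the end are genuine but benign: a.s.\ agreement of choices gives, for a.e.\ $h$, agreement of the $v_0$-partitions up to null sets, and since $f(v_0\,|\,h) > 0$ on $[0,1]$, null-set agreement of intervals forces exact equality of breakpoints; likewise $\Pr[\pi(v) = j] = B_j / n > 0$ guarantees tier $j$'s interval is nondegenerate on a positive-measure set of $h$, which meets the a.e.\ good set, so $\delta_j$ is chained to the anchor $\delta_0 = 0$. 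In fact your transitive-closure gluing is unnecessary: since all prices are strictly positive, every chain already starts at the rejection node, so each chain independently forces its tiers' $\delta$'s to vanish. Both proofs ultimately rest on the same positive-density assumption of \Cref{def:demand_regular}, used by the paper for the strict drop in inclusion probability and by you for null ties and exact breakpoint identification.
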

\begin{proof}
Suppose that there exist two price vectors $\vecc p \neq \vecc p'$ such that 
both  $(\vecc B, \vecc d, \vecc p)$ and  $(\vecc B, \vecc d, \vecc p')$ are 
compatible and EIP-1559 stable with demand $n,F$. Since  $\vecc p \neq 
\vecc p'$, we can partition the prices into two sets of indices:
$$
P^+ = \{j \in [k]\;|\; p'_j > p_j \}\text{ and } P^- = \{j \in [k]\;|\; p'_j \le p_j 
\}.
$$
We use $T_j$ and $T_j'$ for the tier contents of $(\vecc B, \vecc d, \vecc p)$ 
and $(\vecc B, \vecc d, \vecc p')$ respectively.

Suppose that $P^- = \emptyset$. We just need to show that for $(\vecc B, 
\vecc d, \vecc p')$ and demand $n,F$ we have
\begin{equation}\label{ineq:some_empty}
	\sum_{j \in [k]} \expect{T_j'} < \sum_{j \in [k]} B_j,
\end{equation}
which would immediately imply that the prices $\vecc p'$ are not EIP-1559 
stable, since they are all positive and some tiers are not full. 

To obtain a contradiction, we assume that \Cref{ineq:some_empty} holds 
with equality. Let 
$\delta = \min_j \{p_j' - p_j\;|\; j \in [k]\}$ and define a new price vector 
$\hat{\vecc p} = \vecc p + \delta$. By compatibility of prices and 
\Cref{lemma:dominated_price} we have
\begin{equation}\label{eq:same_stuff}
	\sum_{j \in [k]} B_j = \sum_{j \in [k]} \expect{T_j'} \ge \sum_{j \in [k]} 
	\expect{\hat T_j} \ge 
	\sum_{j \in [k]} \expect{T_j} = \sum_{j \in [k]} B_j. 
\end{equation}
Therefore, if \emph{all} prices increase by exactly $\delta$ the total demand 
stays the same. In fact, the demand for \emph{every} tier remains the same 
as 
in $(\vecc B, \vecc d, \vecc p)$  since the utilities are quasilinear and all 
prices 
changed by the same amount. Moreover, the preferred tier for any valuation 
$v$ is the same in $(\vecc B, \vecc d, \vecc p)$ and $(\vecc B, \vecc d, 
\vecc{ 
	\hat{p}})$   We will show that this violates our distributional 
assumptions.

Let $S_j$ be the subset of the support of $F$ containing the transactions that 
would prefer $T_j$ under $\vecc p$. In particular,
$$
S_j = \{(v_0 , h) \in F \;|\; v_0\cdot h(d_j) - p_1 \ge v_0 \cdot h(d_j') - p_j 
\text{ for 
	all } j' \neq j\}.
$$
We have that:
$$
\expect{T_j} = n\cdot \prob{S_j} \Rightarrow \prob{S_j} = \frac{B_j}{n}.
$$
Let $S$ be the subset of the support of $F$ containing the transactions that 
would join some tier under $\vecc p$:
$$
S = \{(v_0, h) \in F \;|\; v_0 \cdot h(d_j) - p_j \ge 0 \text{ for some } j \in 
[k]\}
$$
with $\hat S$ defined similarly. Of course, $S = \cup_{j \in [k]} S_j$ and 
every 
$S_j$ is disjoint.
Continuing from \Cref{eq:same_stuff}:
\begin{equation}
	\sum_{j \in [k]} \expect{\hat{T_j}} = \sum_{j \in [k]} \expect{T_j} 
	\Rightarrow \sum_{j \in [k]} \expect{\hat{T_j}}/n = \sum_{j \in [k]} 
	\expect{T_j}/n
	\Rightarrow \sum_{j \in [k]} \prob{\hat{S_j}}= \sum_{j \in [k]} 
	\prob{S_j}.
\end{equation}
Let $\mathbbm{1}_{t \in S}$ be the indicator variable for transaction $t$ 
joining $S$.  Since all $S_j$ are disjoint:
$$
\sum_{j \in [k]} \prob{\hat{S_j}}= \sum_{j \in [k]} \prob{S_j}
\Rightarrow 
\prob{\cup_{j \in [k]} \hat{S_j}}= \prob{\cup_{j \in [k]} S_j}
\Rightarrow
\expect{\mathbbm{1}_{t \in \hat{S}}} = \expect{\mathbbm{1}_{t \in S}}.
$$
To simplify this calculation we can use the law of total expectation, 
conditioning 
on the discount function $h$ of the valuation. Putting everything together, 
we 
have:
\begin{equation}\label{eq:contradiction}
	\sum_{j \in [k]} \expect{\hat{T_j}} = \sum_{j \in [k]} \expect{T_j}
	\Rightarrow
	\expect{\expect{\mathbbm{1}_{t \in \hat{S}}\;|\; h}}
	=
	\expect{\expect{\mathbbm{1}_{t \in S}\;|\; h}}
\end{equation}
where $h$ is a discount function. But then:
$$
\expect{\mathbbm{1}_{t \in S}\;|\; h} 
=
\prob{v_0 \ge V_h \;|\; h}
\quad\text{ and }\quad
\expect{\mathbbm{1}_{t \in \hat{S}}\;|\; h}
=
\prob{v_0 \ge \hat{V_h} \;|\; h}
$$
This follows because having conditioned on $h$, the valuation is 
increasing in $v_0$, therefore there is a minimum $V_h$ above which 
every $v_0$ would join some tier.  We know that 
for a transaction with $v_0 = \hat{V_h}$ (included in some tier $j$ under 
$\hat{\vecc p}$) we have
$$
\hat{V_h} \cdot h(d_j) - (p_j + \delta) \Rightarrow  \hat{V_h} \cdot h(d_j) - 
p_j 
\ge \delta > 0,
$$
so there exists some value $\hat{V_h} - \epsilon$ such that
$$
(\hat{V_h} - \epsilon)\cdot h(d_j)  - p_j > 0.
$$

Therefore under prices $\vecc p$, any $\hat{V_h} \ge v_0 \ge \hat{V_h} - 
\epsilon$ 
would be included in some tier for discount $h$:
\begin{align*}
	\prob{v_0 \ge V_h \;|\; h} &\ge \prob{v_0 \ge \hat{V_h}-\epsilon \;|\; h}\\
	&\ge
	\prob{\hat{V_h} > v_0 \ge \hat{V_h}-\epsilon \;|\; h}
	+
	\prob{v_0 \ge \hat{V_h}\;|\; h}\\ 
	&>
	\prob{v_0 \ge \hat{V_h}\;|\; h}.
\end{align*}
This implies that
$$
\expect{\mathbbm{1}_{t \in S}\;|\; h} 
>
\expect{\mathbbm{1}_{t \in \hat{S}}\;|\; h},
$$
which leads to a contradiction of \Cref{eq:contradiction} by taking the 
expected value of either side.

For the case where  $P^- \neq \emptyset$, clearly, the demand for any tier $j 
\in P^+$ can only be strictly less for $(\vecc B, \vecc d, \vecc p')$ than it is 
for 
$(\vecc B, \vecc d, \vecc p)$, as the price of those tiers increased while the 
remaining prices stayed the same or decreased. The proof proceeds using a 
slight modification of the previous analysis, with the added case that instead 
of 
the prices of $P^+$ not being EIP-1559 stable, there is a chance that some 
demand spilled over into the tier $P^-$ making the new prices incompatible 
with the demand.
\end{proof}

This result provides some guarantee that as long as the tiers and delays are 
mostly fixed (or at the very least, change slowly and the first line of defense 
against changing demand are the fluctuating prices), any sensible price update 
oscillations. 
\subsection{Steady-state of the Tiered Pricing Mechanism}
Under load $n, F$ and fixed $\vecc B$, the Tiered Pricing mechanism aims to find a 
blockchain $(\vecc B, \vecc d, \vecc p)$ such that:

\begin{align} \label{ineq:inclusive}
	d_{j+1} \ge \lambda_j \cdot d_j \text{ for } \lambda_j > 1,
	\text{ and } p_{j+1} \le \mu_j \cdot p_j \text{ for } \mu_j < 1
\end{align}

\ignore{
\begin{itemize} 
	\item $d_{j+1} \ge \lambda_j \cdot d_j$ for $\lambda_j > 1$
	\item $p_{j+1} \le \mu_j \cdot p_j$ for $\mu_j < 1$.
\end{itemize}
}
We show that it is always possible to find such delay and price combinations.
\begin{theorem}\label{thm:sec}
	For any $\vecc B$, factors $\lambda_j, \mu_j$ and regular demand $n, F$ there 
	exists a blockchain $(\vecc B, \vecc d, \vecc p)$ which is compatible, 
	EIP-1559 
	stable and satisfies Inequalities~\ref{ineq:inclusive}.
\end{theorem}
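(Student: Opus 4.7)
The plan is to reduce the problem to finding a suitable delay vector $\vecc d$, since Theorems~\ref{thm:existence_of_prices} and~\ref{thm:uniqueness_of_prices} guarantee that for any $\vecc d$ with $d_1 = 1$ and $d_{j+1} \ge \lambda_j d_j$ there is a unique price vector $\vecc p(\vecc d)$ making $(\vecc B, \vecc d, \vecc p(\vecc d))$ compatible and EIP-1559 stable. The goal is therefore to exhibit one such $\vecc d$ whose induced prices additionally satisfy $p_{j+1} \le \mu_j p_j$ for every $j \in [k-1]$.

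First I would establish that the map $\vecc d \mapsto \vecc p(\vecc d)$ is continuous, by combining Lemma~\ref{lemma:continuity-in-p-d} (continuity of $\expect{T_j}$ in the parameters) with the uniqueness guarantee from Theorem~\ref{thm:uniqueness_of_prices}: any accumulation point of $\vecc p(\vecc d^{(m)})$ along $\vecc d^{(m)}\to \vecc d$ must itself be compatible and EIP-1559 stable, hence equal to $\vecc p(\vecc d)$. Next, the key step is an asymptotic analysis of the prices when consecutive delays are spread far apart. By Proposition~\ref{prop:all_tiers_full}, each tier is exactly filled in expectation. Conditional on a discount function $h$, a transaction with value $v_0$ is assigned to tier $j$ precisely when $v_0$ lies in an interval $(c_j(h), c_{j-1}(h)]$ determined by indifference thresholds $c_j(h) = (p_j - p_{j+1})/(h(d_j) - h(d_{j+1}))$, with $c_k(h) = p_k/h(d_k)$. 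Averaging over $h$, the size equations $n \cdot \Expectation_h[F_{0|h}(c_{j-1}(h)) - F_{0|h}(c_j(h))] = B_j$ implicitly tie the $c_j(h)$'s to $\vecc B$, $n$ and $F$.

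In the regime where $h(d_{j+1})/h(d_j) \to 0$ for each $j$ (which can be forced tier by tier using that $h$ is continuous, strictly decreasing and tends to $0$), the thresholds simplify asymptotically to $c_j(h) \approx p_j/h(d_j)$. Plugging back, the tier-size equations pin down the $c_j(h)/h(d_j)$-profile of $p_j$ to a fixed $\vecc B$-determined value, and hence $p_{j+1}/p_j \approx (c_{j+1}/c_j)\cdot(h(d_{j+1})/h(d_j))$, where $c_{j+1}/c_j < 1$ is a constant. Since the second factor can be made arbitrarily small, recursively choosing $d_{j+1} \ge \lambda_j d_j$ so that $h(d_{j+1})/h(d_j)$ is sufficiently small (relative to $\mu_j$ and $c_j/c_{j+1}$) drives the ratio below $\mu_j$ in the limit; continuity of $\vecc p(\vecc d)$ then transfers the strict inequality to an actual finite choice of $\vecc d$, yielding the desired compatible, stable, constraint-satisfying blockchain.

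The main obstacle I expect is making the limit argument rigorous when $h$ is itself random: one must control convergence of the threshold expectations uniformly in $h$ and verify that the decoupling $c_j(h)\approx p_j/h(d_j)$ is compatible with averaging, which requires delicate use of the regularity assumptions (continuity, positivity of $f(v_0\mid h)$, and $h(d)\to 0$). A possible cleaner alternative is a Brouwer fixed-point argument on an appropriately compactified set of valid delay vectors, trading off intuition for technical convenience; the sequential construction above more directly reflects the underlying mechanism.
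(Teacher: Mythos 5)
There is a genuine gap, and it sits exactly where you flag it: the step ``$h(d_{j+1})/h(d_j)\to 0$ for each $j$ \ldots{} can be forced tier by tier'' is not justified, and in general it is \emph{false} uniformly over the random discount function. The regularity assumptions only give $h(d)\to 0$ pointwise for each fixed $h$; the distribution over $h$ may contain functions decaying at arbitrarily heterogeneous rates. Concretely, for the family $h_u(d)=u^{d-1}$ with $u$ uniform on $(0,1)$ (precisely the family used in the paper's experiments), $\sup_u h_u(d_{j+1})/h_u(d_j)=\sup_u u^{d_{j+1}-d_j}=1$ for every finite choice of delays, so no delay vector makes the ratio small on the whole support. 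Since your asymptotic decoupling $c_j(h)\approx p_j/h(d_j)$ and the derived bound $p_{j+1}/p_j\approx (c_{j+1}/c_j)\cdot\bigl(h(d_{j+1})/h(d_j)\bigr)$ must hold for the transactions that actually populate the tiers, the ``slow'' discounts in the bad tail contaminate the tier-size equations, and the limit argument does not close. Writing ``one must control convergence uniformly in $h$'' names the obstacle but does not overcome it, so as it stands the proposal proves the theorem only under an extra uniform-decay hypothesis on $F$.

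The paper's proof resolves exactly this point with a probabilistic relaxation that your argument is missing: it chooses $d_{j+1}\ge\lambda_j d_j$ large enough that the event $H_{j+1}=\{h(d_{j+1})<\mu_j\, h(d_j)\}$ has probability at least $1-\delta/((k-1)n)$ (this only needs pointwise decay plus continuity of measure, so it works for heavy-tailed $h$-distributions), and a union bound makes all $n$ sampled transactions land in $H=\bigcap_j H_j$ with probability $1-\delta$. It then invokes only \Cref{thm:existence_of_prices} and \Cref{prop:all_tiers_full}: since tier $j+1$ is exactly full in expectation, there is a \emph{witness} transaction with $h\in H$ that strictly prefers tier $j+1$ and has positive utility, and a two-line pointwise computation ($v_0 h(d_j)-p_j < v_0 h(d_{j+1})-p_{j+1}$ combined with $h(d_{j+1})<\mu_j h(d_j)$ and $v_0 h(d_{j+1})>p_{j+1}$) yields $p_{j+1}<\mu_j p_j$ outright --- no asymptotics, no fixed point, and notably no continuity of the map $\vecc d\mapsto\vecc p(\vecc d)$ and no appeal to \Cref{thm:uniqueness_of_prices} at all. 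Your accumulation-point argument for continuity of $\vecc p(\vecc d)$ is plausible and could be of independent interest, but to repair your proof you would need precisely the paper's device: tolerate a small-probability bad set of discount functions and extract a single good-set witness from the fullness of each tier, rather than trying to control the price ratio through a support-wide limit.
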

\begin{proof}
	Set $d_1 = 0$. Then, define
	$$
		d_{j+1} = \max\{\lambda_j \cdot d_j, \argmin_d \{\prob{h(d) < \mu _j 
		\cdot h(d_j) / 2} > 1 - 
		\delta \cdot \frac{1}{k-1}\cdot \frac{1}{ n}\}\}.
	$$
	Essentially, $d_{j+1}$ is sufficiently higher than $d_j$ so that the discount 
	factor incurred to almost all transactions is at least $\lambda_j / 2$ smaller 
	than for $d_j$.  We can capture this `almost all' property with the following 
	sets:
	 $$
	 H_{j+1} = \{(v_0, h) \in \text{supp}(F) \;|\;  h(d_{j+1}) < 
	\mu_j \cdot h(d_{j})\}
	$$
	and $H = \cap^k_{j \ge 1} H_j$. The probability that in a sample of $n$ 
	transactions we observe none with discount functions outside of the set $H$ 
	is:
	\begin{align*}
		\prob{\forall i \in [n] \; t_i \in H} &= 1 - \prob{\exists i \in [n] \text{ s.t. } 
		t_i \in 
		H^C}\\ 
		&\ge 1 - n \cdot \prob{\cup_{j \ge 1}^k H_j^C}\\
		&\ge 1 - n\cdot\sum_{j \ge 1}^k \delta / (k-1)/n = 1 - \delta.
	\end{align*}
	Therefore, the delays $d_{j+1}$ can be set high enough so that for with  
	probability at least $1-\delta$ all transactions in a sample have discount 
	functions such that consecutive tiers are at least $\mu_j$ times less 
	valuable.
	
	Let $\vecc p$ be the compatible and EIP-1559 stable prices guaranteed to 
	exist \Cref{thm:existence_of_prices}, for the delays $d_1, d_2, \ldots, d_k$ 
	that we just described. By \Cref{prop:all_tiers_full} we know that each tier 
	needs to be exactly full for these prices. Therefore, for any tier $j$ there must 
	exist a transaction with $h \in H$ and $v_0$ such that tier $j+1$ is 
	preferable to tier $j$ and its utility is positive. Formally:
	\begin{align*}
		v_0\cdot h(d_j) - p_j &< v_0 \cdot h(d_{j+1}) -p_{j+1}\\ 
		\Rightarrow
		v_0\cdot h(d_{j+1}) / \mu_j- p_j &< v_0 \cdot h(d_{j+1}) -p_{j+1}\\ 
		\Rightarrow
		v_0\cdot h(d_{j+1}) &< \frac{p_j - p_{j+1}}{1/\mu_j - 1}.
	\end{align*}
	Using that the utility is positive (i.e. $v_0\cdot h(d_{j+1}) - p_{j+1}> 0$ we 
	continue:
	\begin{align*}
		p_{j+1} &< \frac{p_j - p_{j+1}}{1/\mu_j - 1}\\
		\Rightarrow 
		p_{j+1} &< p_j \cdot \mu_j.
	\end{align*}
Therefore, for the appropriately chosen delays it is guaranteed that the prices 
will also follow our constraints.
\end{proof}

The above theorem can be rephrased to talk about the ability of the Tiered Pricing mechanism to implement policies in the class defined by Inequality~\ref{ineq:inclusive}.
\begin{corollary}
	For any regular demand $n, F$, any $k\geq B$, and any parameter set $\{(a_j,\lambda_j, \mu_j)\}_{i\in[k-1]}$, where $a_j\in(0,1],\lambda_j\in\mathbb{R}_{>1},\mu_j\in(0,1)$, for $j\in[k-1]$, and $\sum_{j}a_i =1$,  the Tiered Pricing mechanism with the same parameters and $tFreq:=\infty$ implements policies $P:=\{(a_i,d_i,p_i)\}_{i\in[k]}$ whose clauses satisfy Inequality~\ref{ineq:inclusive}. 
\end{corollary}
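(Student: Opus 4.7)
My plan is to show that this corollary follows almost directly from Theorem~\ref{thm:sec} together with Proposition~\ref{prop:all_tiers_full}. Because $tFreq := \infty$, the tier count and tier sizes never change from the initial configuration, so I begin by initializing $k$ active tiers of sizes $B_j := a_j \cdot B$ (using $\sum_j a_j = 1$ to justify that these sum to $B$). Only the delays and prices then evolve dynamically, and Theorem~\ref{thm:sec} applied to the parameters $\vecc B$, $(\lambda_j,\mu_j)_{j}$ yields delays $\vecc d$ and prices $\vecc p$ such that $(\vecc B, \vecc d, \vecc p)$ is compatible with $n,F$, EIP-1559 stable, and satisfies the inclusive inequalities (\ref{ineq:inclusive}). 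I take the candidate policy to be $P := \{(a_j, d_j, p_j)\}_{j \in [k]}$; by construction its clauses already satisfy Inequality~\ref{ineq:inclusive}, so it only remains to verify the implementation condition.

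Next, I invoke Proposition~\ref{prop:all_tiers_full} on $(\vecc B,\vecc d,\vecc p)$ to deduce that each tier is exactly full in expectation: $\Expectation[T_j] = B_j = a_j \cdot B$. Fix any clause $(a_i,d_i,p_i) \in P$. For every transaction $v$ that the mechanism assigns to tier $i$, its utility equals $u(v\mid \mathcal{M}, F) = v(d_i) - p_i$, which trivially meets the inequality $u(v\mid\mathcal{M},F) \geq v(d_i) - p_i$ required by the implementation definition. Moreover, by the demand-regularity assumption (Definition~\ref{def:demand_regular}), the marginal density of $v_0$ has no point masses, so conditional on tier-$i$ inclusion one has $v(d_i) - p_i > 0$ almost surely. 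Consequently the expected number of transactions meeting \emph{both} conditions of the implementation definition is at least $\Expectation[T_i] = a_i \cdot B$, and the mechanism implements $P$.

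The main potential obstacle is a subtle one: the implementation definition counts all transactions whose mechanism-utility dominates the clause's threshold \emph{and} whose threshold is positive, not just those routed to tier $i$. One might hope to also include transactions placed in lower-indexed tiers $j<i$ (whose mechanism-utility $v(d_j)-p_j$ dominates $v(d_i)-p_i$ by the utility-maximizing tier choice), but for those transactions the strict positivity $v(d_i)-p_i > 0$ is not automatic. My proof sidesteps this by restricting attention solely to the transactions assigned to tier $i$ itself, which suffice to cover the $a_i\cdot B$ quota. The only other thing to sanity-check is that the initial partition into $k$ tiers of sizes $a_j\cdot B$ is legitimate under the mechanism with $tFreq := \infty$; since \textsc{UpdateTierSizes} is then never invoked, the sizes and number of tiers are preserved throughout, and the dynamic price/delay updates target precisely the compatible, EIP-1559-stable configuration guaranteed by Theorem~\ref{thm:sec} (unique in $\vecc p$ by Theorem~\ref{thm:uniqueness_of_prices}).
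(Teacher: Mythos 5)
Your proposal is correct and follows essentially the same route as the paper, which presents the corollary as a direct rephrasing of Theorem~\ref{thm:sec} with fixed tier sizes $B_j = a_j\cdot B$ (frozen by $tFreq := \infty$), combined with Proposition~\ref{prop:all_tiers_full} to get $\Expectation[T_i] = a_i\cdot B$ and demand regularity to rule out the measure-zero tie $v(d_i) - p_i = 0$. Your explicit handling of the subtlety that only tier-$i$ transactions (rather than those in faster tiers) are counted toward clause $i$ is a sound, if implicit in the paper, detail that matches the intended argument.
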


Unfortunately, there are many pairs of $\vecc p$ and $\vecc d$ satisfying 
these properties. For example, the delay of the last tier can always be increased 
(lowering its corresponding price), while maintaining our guarantees. However, 
our tiered pricing mechanism finds a local minimum solution, which does not 
unnecessarily inflate delays.

\begin{theorem}
	For any $\vecc B$, factors $\lambda_j, \mu_j$ and demand $n, F$ there 
	exists a blockchain $(\vecc B, \vecc d, \vecc p)$ which is compatible, 
	EIP-1559 stable and satisfies Inequalities~\ref{ineq:inclusive}.  
	Moreover, no delay can be independently lowered without sacrificing any of 
	our guarantees.
\end{theorem}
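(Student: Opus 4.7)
The existence portion of this theorem restates Theorem~\ref{thm:sec}, so the novel content is the coordinate-wise minimality of the delays. My plan is to obtain a feasible configuration that minimizes a sum-of-delays objective over a compact slice of the feasible region, and then observe that any such minimizer must satisfy the claimed Pareto-optimality condition.

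Let $D \subseteq \RR_{\ge 1}^k$ denote the set of delay vectors $\vecc d$ for which the unique EIP-1559-stable and compatible price vector $\vecc p(\vecc d)$ guaranteed by Theorems~\ref{thm:existence_of_prices} and~\ref{thm:uniqueness_of_prices} additionally makes $(\vecc B, \vecc d, \vecc p(\vecc d))$ satisfy Inequality~\ref{ineq:inclusive}. Theorem~\ref{thm:sec} guarantees $D \neq \emptyset$; fix any $\vecc d^{(0)} \in D$ and define $S := D \cap \prod_{j=1}^k [1, d_j^{(0)}]$, which is non-empty since $\vecc d^{(0)} \in S$. To argue that $S$ is compact I need closedness of $D$, which in turn reduces to continuity of the implicit map $\vecc d \mapsto \vecc p(\vecc d)$. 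This continuity should follow from a standard subsequence argument: prices are uniformly bounded (by the arguments preceding Theorem~\ref{thm:existence_of_prices}), so any sequence $\vecc d^{(n)} \to \vecc d$ in $D$ admits a subsequence along which $\vecc p(\vecc d^{(n)})$ converges to some $\vecc p^{*}$; Lemma~\ref{lemma:continuity-in-p-d} then implies $(\vecc B, \vecc d, \vecc p^{*})$ is compatible and EIP-1559-stable, and the uniqueness guarantee of Theorem~\ref{thm:uniqueness_of_prices} forces $\vecc p^{*} = \vecc p(\vecc d)$, so the entire sequence converges. The price-side clauses of Inequality~\ref{ineq:inclusive} are closed conditions preserved under this continuity, and the delay-side clauses are trivially closed, so $D$ is indeed closed.

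Now let $\vecc d^{*} \in S$ minimize the continuous functional $f(\vecc d) := \sum_j d_j$, which is attained by compactness, and set $\vecc p^{*} := \vecc p(\vecc d^{*})$. The triple $(\vecc B, \vecc d^{*}, \vecc p^{*})$ witnesses the existence claim. For the minimality claim, suppose toward contradiction that some coordinate $d_j^{*}$ could be independently lowered to some $d_j' < d_j^{*}$ while preserving compatibility, EIP-1559 stability, and Inequality~\ref{ineq:inclusive}. The modified delay vector $\vecc d'$ still lies in $S$ (all other coordinates are unchanged, and $d_j' < d_j^{*} \le d_j^{(0)}$) and satisfies $f(\vecc d') = f(\vecc d^{*}) - (d_j^{*} - d_j') < f(\vecc d^{*})$, contradicting the choice of $\vecc d^{*}$. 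The main obstacle I anticipate is pinning down the continuity of $\vecc p(\vecc d)$ that underlies closedness of $D$; once that is handled via the uniqueness-plus-subsequence argument sketched above, the remainder of the proof is a routine Weierstrass-style compactness and optimization reduction.
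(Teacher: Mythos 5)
Your proposal is correct, but it takes a genuinely different route from the paper's. The paper omits its proof, stating only that it is similar to \Cref{thm:existence_of_prices} --- that is, it adapts the iterative monotone-adjustment process of \Cref{lemma:technical_existence} to the delay coordinates, mirroring the mechanism's own \textsc{UpdateDelays} loop (increment $d_{j+1}$ when $p_{j+1} > \mu_j p_j$, otherwise probabilistically decrement subject to $d_{j+1} \ge \lambda_j d_j$), and shows this process converges to a configuration from which no single delay can be decreased; that constructive route doubles as a convergence statement for the dynamics the mechanism actually runs, supporting the surrounding claim that ``our tiered pricing mechanism finds a local minimum solution.'' You instead argue variationally: boundedness of stable prices in $[0,1]$, joint continuity of expected demands (\Cref{lemma:continuity-in-p-d}), fullness of all tiers (\Cref{prop:all_tiers_full}), and uniqueness (\Cref{thm:uniqueness_of_prices}) combine, via the subsequence trick, to give continuity of the equilibrium price map $\vecc d \mapsto \vecc p(\vecc d)$, hence closedness of the feasible delay set $D$; non-emptiness comes from \Cref{thm:sec}, a box slice gives compactness, and a Weierstrass minimizer of $\sum_j d_j$ is trivially coordinate-wise non-improvable, which is exactly the theorem's Pareto clause. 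What each approach buys: yours is shorter and self-contained, and the uniqueness-plus-subsequence identification of limit prices is an elegant device the paper never exploits; the paper's is non-variational but algorithmic, tying the existence result to the mechanism's actual update rule. One caveat to make explicit if you write this up: your subsequence step needs continuity of $\expect{T_j}$ \emph{jointly} in $(\vecc p, \vecc d)$, and while \Cref{lemma:continuity-in-p-d} asserts continuity in both arguments, its appendix proof only details perturbations of a single price coordinate, so your argument leans on the lemma's statement slightly beyond what is proven in detail --- precisely the obstacle you flagged, and your use of it as stated is legitimate.
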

The proof is omitted, as it is similar \Cref{thm:existence_of_prices}.


\section{Experiments}


In this section, we present an experimental evaluation of the Tiered Pricing 
mechanism, in its complete generality as presented in Mechanism~\ref{alg:TP}. We set the 
maximum number of tiers to $k = 4$. The total size of the block fits $B = 120$ 
transactions 
and each tier can use a quarter of it. We update the delays every 25 blocks and 
the tier sizes every 100 blocks. In addition, we decrease the delays with 
probability $0.25$, add new tiers when the lower price is\footnote{To aid readability we allow prices to be bigger than $1$ in this section. This change does not affect any of the results of the previous section.} 2 and remove them if 
it drops below 1.

In the following experiment, the load changes dynamically over time. There are 
4 regions, each taking about a quarter of the overall execution timeline. All 
discount functions are of the form,
$u^{d - 1}$, where $u$ is an urgency parameter. For the 
first two regions, the value and discount factor of each transaction are 
correlated and drawn from  the following combinations:
\begin{itemize}
	\item With probability $0.2$: $v_0 \in \mathcal{N}(200, 10)$ and $u \in 
	\mathcal{U}[3,4]$.
	\item With probability $0.4$: $v_0 \in \mathcal{N}(50, 10)$ and $u \in 
	\mathcal{U}[2,3]$.
	\item With probability $0.4$: $v_0 \in \mathcal{N}(20, 2)$ and $u \in 
	\mathcal{U}[1,1.5]$.
\end{itemize}
where $\mathcal{N}[\mu,\sigma^2]$ is the normal distribution with mean $\mu$ and variance $\sigma^2$ and $\mathcal{U}[a,b]$ is the uniform distribution in the interval $[a,b]$.
In the last two regions, they are uncorrelated. 
\begin{itemize}
	\item  $v_0 \in \mathcal{U}(1, 90)$ and $u \in \mathcal{U}[1,2]$.
\end{itemize}
In addition the load varies: it starts by being slightly less than throughput, then 
4 times higher, 2 times higher and finally just below again. Observe how the 
mechanism responds to these changes.
\begin{figure}[H]
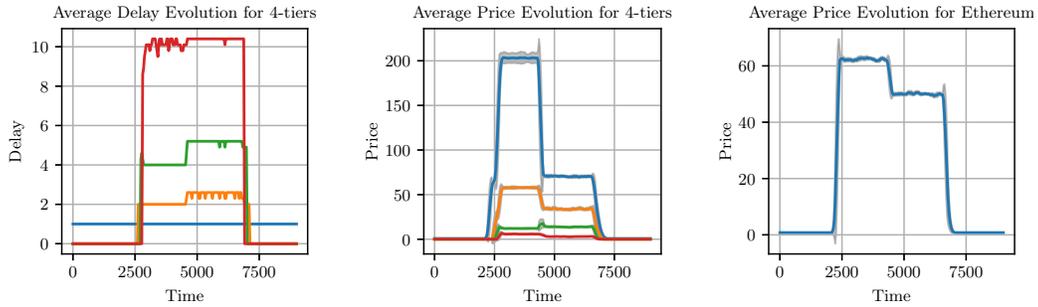

	\begin{center}
	\scalebox{0.7}{\input{figures/delays.pgf}}  

	\scalebox{0.7}{\input{figures/prices.pgf}}  

	\scalebox{0.7}{\input{figures/eth_prices.pgf}}  

	\end{center}
	\caption{The way the price of our tiered pricing mechanism evolves 
	compared to Ethereum. Notice that in the second region the urgent cluster 
	has been identified and priced accordingly, without raising the overall level of 
	prices too much.}
\end{figure}

\begin{figure}[H]
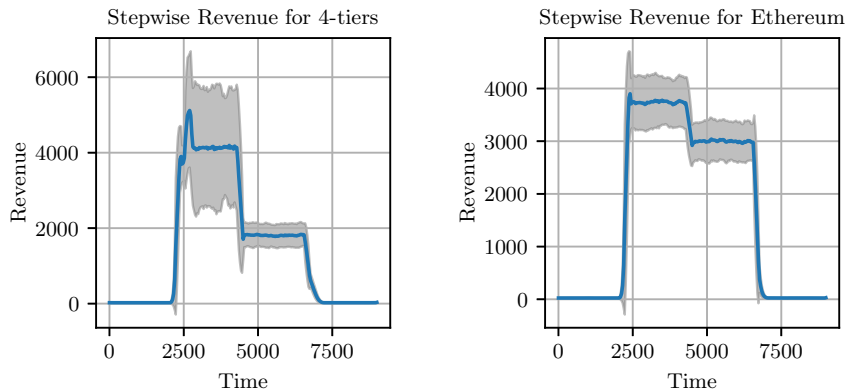

	\begin{center}
	\scalebox{0.9}{\input{figures/step_revenue.pgf}}  

	\scalebox{0.9}{\input{figures/eth_step_revenue.pgf}}  

	\end{center}
	\caption{The revenue of tiered pricing vs. Ethereum. Average revenue is 
	higher for tiered pricing in the second region, even though about $3/4$ of the 
	transactions are charged a lower price.}
\end{figure}

In general, it is hard to make generalizations about the performance of either 
system. We observe that in the short term, tiered pricing manages to identify 
and price urgent and non-urgent transactions accordingly, shifting the balance 
towards the latter while keeping a fair amount of the former. This comes at the 
cost of added complexity and convergence times after demand shocks. 
However, the long 
term growth of either ecosystem could be very different, based on the kind of 
traffic prioritized. Considering that a cooperation of high and low urgency 
applications are desirable, we propose that a tiered pricing solution could be 
better tailored to nurture this growth.


\bibliographystyle{plain}
\bibliography{tiers}


\begin{thebibliography}{16}


\ifx \showCODEN    \undefined \def \showCODEN     #1{\unskip}     \fi
\ifx \showDOI      \undefined \def \showDOI       #1{#1}\fi
\ifx \showISBNx    \undefined \def \showISBNx     #1{\unskip}     \fi
\ifx \showISBNxiii \undefined \def \showISBNxiii  #1{\unskip}     \fi
\ifx \showISSN     \undefined \def \showISSN      #1{\unskip}     \fi
\ifx \showLCCN     \undefined \def \showLCCN      #1{\unskip}     \fi
\ifx \shownote     \undefined \def \shownote      #1{#1}          \fi
\ifx \showarticletitle \undefined \def \showarticletitle #1{#1}   \fi
\ifx \showURL      \undefined \def \showURL       {\relax}        \fi
\providecommand\bibfield[2]{#2}
\providecommand\bibinfo[2]{#2}
\providecommand\natexlab[1]{#1}
\providecommand\showeprint[2][]{arXiv:#2}

\bibitem[Eth({[n.\,d.]})]%
        {Ethereum}
 \bibinfo{year}{[n.\,d.]}\natexlab{}.
\newblock \bibinfo{title}{{The} {E}thereum {P}roject}.
\newblock \bibinfo{howpublished}{https://www.ethereum.org/}.
\newblock


\bibitem[Basu et~al\mbox{.}(2019)]%
        {basu2019towards}
\bibfield{author}{\bibinfo{person}{Soumya Basu}, \bibinfo{person}{David
  Easley}, \bibinfo{person}{Maureen O'Hara}, {and}
  \bibinfo{person}{Emin~G{\"u}n Sirer}.} \bibinfo{year}{2019}\natexlab{}.
\newblock \showarticletitle{Towards a functional fee market for
  cryptocurrencies}.
\newblock \bibinfo{journal}{\emph{arXiv preprint arXiv:1901.06830}}
  (\bibinfo{year}{2019}).
\newblock


\bibitem[Buterin(2016)]%
        {buterinforward}
\bibfield{author}{\bibinfo{person}{Vitali Buterin}.}
  \bibinfo{year}{2016}\natexlab{}.
\newblock \bibinfo{title}{On inflation, transaction fees and cryptocurrency
  monetary pol- icy}.
\newblock
\newblock
\urldef\tempurl%
\url{https://blog.ethereum.org/2016/07/27/inflation-transaction-fees-cryptocurrency-monetary-policy}
\showURL{%
\tempurl}


\bibitem[Buterin et~al\mbox{.}(2019)]%
        {buterinethereum}
\bibfield{author}{\bibinfo{person}{Vitalik Buterin}, \bibinfo{person}{Eric
  Conner}, \bibinfo{person}{Rick Dudley}, \bibinfo{person}{Matthew Slipper},
  {and} \bibinfo{person}{Ian Norden}.} \bibinfo{year}{2019}\natexlab{}.
\newblock \bibinfo{title}{Ethereum improvement proposal 1559: Fee market change
  for eth 1.0 chain}.
\newblock
\newblock
\urldef\tempurl%
\url{https://github.com/ethereum/EIPs/blob/master/EIPS/eip-1559.md}
\showURL{%
\tempurl}


\bibitem[Chung and Shi(2021)]%
        {DBLP:journals/iacr/ChungS21}
\bibfield{author}{\bibinfo{person}{Hao Chung} {and} \bibinfo{person}{Elaine
  Shi}.} \bibinfo{year}{2021}\natexlab{}.
\newblock \showarticletitle{Foundations of Transaction Fee Mechanism Design}.
\newblock \bibinfo{journal}{\emph{{IACR} Cryptol. ePrint Arch.}}
  (\bibinfo{year}{2021}), \bibinfo{pages}{1474}.
\newblock
\urldef\tempurl%
\url{https://eprint.iacr.org/2021/1474}
\showURL{%
\tempurl}


\bibitem[Chung and Shi(2023)]%
        {chung2023foundations}
\bibfield{author}{\bibinfo{person}{Hao Chung} {and} \bibinfo{person}{Elaine
  Shi}.} \bibinfo{year}{2023}\natexlab{}.
\newblock \showarticletitle{Foundations of transaction fee mechanism design}.
  In \bibinfo{booktitle}{\emph{Proceedings of the 2023 Annual ACM-SIAM
  Symposium on Discrete Algorithms (SODA)}}. SIAM, \bibinfo{pages}{3856--3899}.
\newblock


\bibitem[Ferreira et~al\mbox{.}(2021)]%
        {ferreira2021dynamic}
\bibfield{author}{\bibinfo{person}{Matheus~VX Ferreira},
  \bibinfo{person}{Daniel~J Moroz}, \bibinfo{person}{David~C Parkes}, {and}
  \bibinfo{person}{Mitchell Stern}.} \bibinfo{year}{2021}\natexlab{}.
\newblock \showarticletitle{Dynamic posted-price mechanisms for the blockchain
  transaction-fee market}. In \bibinfo{booktitle}{\emph{Proceedings of the 3rd
  ACM Conference on Advances in Financial Technologies}}.
  \bibinfo{pages}{86--99}.
\newblock


\bibitem[Fiat et~al\mbox{.}(2016)]%
        {DBLP:conf/sigecom/FiatGKK16}
\bibfield{author}{\bibinfo{person}{Amos Fiat}, \bibinfo{person}{Kira Goldner},
  \bibinfo{person}{Anna~R. Karlin}, {and} \bibinfo{person}{Elias Koutsoupias}.}
  \bibinfo{year}{2016}\natexlab{}.
\newblock \showarticletitle{The FedEx Problem}. In
  \bibinfo{booktitle}{\emph{Proceedings of the 2016 {ACM} Conference on
  Economics and Computation, {EC} '16, Maastricht, The Netherlands, July 24-28,
  2016}}, \bibfield{editor}{\bibinfo{person}{Vincent Conitzer},
  \bibinfo{person}{Dirk Bergemann}, {and} \bibinfo{person}{Yiling Chen}}
  (Eds.). \bibinfo{publisher}{{ACM}}, \bibinfo{pages}{21--22}.
\newblock
\urldef\tempurl%
\url{https://doi.org/10.1145/2940716.2940752}
\showDOI{\tempurl}


\bibitem[Goldberg et~al\mbox{.}(2006)]%
        {goldberg2006competitive}
\bibfield{author}{\bibinfo{person}{Andrew~V Goldberg}, \bibinfo{person}{Jason~D
  Hartline}, \bibinfo{person}{Anna~R Karlin}, \bibinfo{person}{Michael Saks},
  {and} \bibinfo{person}{Andrew Wright}.} \bibinfo{year}{2006}\natexlab{}.
\newblock \showarticletitle{Competitive auctions}.
\newblock \bibinfo{journal}{\emph{Games and Economic Behavior}}
  \bibinfo{volume}{55}, \bibinfo{number}{2} (\bibinfo{year}{2006}),
  \bibinfo{pages}{242--269}.
\newblock


\bibitem[Lavi et~al\mbox{.}(2022)]%
        {lavi2022redesigning}
\bibfield{author}{\bibinfo{person}{Ron Lavi}, \bibinfo{person}{Or Sattath},
  {and} \bibinfo{person}{Aviv Zohar}.} \bibinfo{year}{2022}\natexlab{}.
\newblock \showarticletitle{Redesigning Bitcoin’s fee market}.
\newblock \bibinfo{journal}{\emph{ACM Transactions on Economics and
  Computation}} \bibinfo{volume}{10}, \bibinfo{number}{1}
  (\bibinfo{year}{2022}), \bibinfo{pages}{1--31}.
\newblock


\bibitem[Leonardos et~al\mbox{.}(2021)]%
        {leonardos2021dynamical}
\bibfield{author}{\bibinfo{person}{Stefanos Leonardos},
  \bibinfo{person}{Barnab{\'e} Monnot}, \bibinfo{person}{Dani{\"e}l
  Reijsbergen}, \bibinfo{person}{Efstratios Skoulakis}, {and}
  \bibinfo{person}{Georgios Piliouras}.} \bibinfo{year}{2021}\natexlab{}.
\newblock \showarticletitle{Dynamical analysis of the eip-1559 ethereum fee
  market}. In \bibinfo{booktitle}{\emph{Proceedings of the 3rd ACM Conference
  on Advances in Financial Technologies}}. \bibinfo{pages}{114--126}.
\newblock


\bibitem[Leonardos et~al\mbox{.}(2022)]%
        {leonardos2022optimality}
\bibfield{author}{\bibinfo{person}{Stefanos Leonardos},
  \bibinfo{person}{Dani{\"e}l Reijsbergen}, \bibinfo{person}{Dani{\"e}l
  Reijsbergen}, \bibinfo{person}{Barnab{\'e} Monnot}, {and}
  \bibinfo{person}{Georgios Piliouras}.} \bibinfo{year}{2022}\natexlab{}.
\newblock \showarticletitle{Optimality Despite Chaos in Fee Markets}.
\newblock \bibinfo{journal}{\emph{arXiv preprint arXiv:2212.07175}}
  (\bibinfo{year}{2022}).
\newblock


\bibitem[Nakamoto(2008)]%
        {Nakamoto2008}
\bibfield{author}{\bibinfo{person}{Satoshi Nakamoto}.}
  \bibinfo{year}{2008}\natexlab{}.
\newblock \bibinfo{title}{Bitcoin: A peer-to-peer electronic cash system.}
\newblock \bibinfo{howpublished}{http://bitcoin.org/bitcoin.pdf}.
\newblock


\bibitem[Reijsbergen et~al\mbox{.}(2021)]%
        {reijsbergen2021transaction}
\bibfield{author}{\bibinfo{person}{Dani{\"e}l Reijsbergen},
  \bibinfo{person}{Shyam Sridhar}, \bibinfo{person}{Barnab{\'e} Monnot},
  \bibinfo{person}{Stefanos Leonardos}, \bibinfo{person}{Stratis Skoulakis},
  {and} \bibinfo{person}{Georgios Piliouras}.} \bibinfo{year}{2021}\natexlab{}.
\newblock \showarticletitle{Transaction fees on a honeymoon: Ethereum's
  eip-1559 one month later}. In \bibinfo{booktitle}{\emph{2021 IEEE
  International Conference on Blockchain (Blockchain)}}. IEEE,
  \bibinfo{pages}{196--204}.
\newblock


\bibitem[Roughgarden(2021)]%
        {DBLP:conf/sigecom/Roughgarden21}
\bibfield{author}{\bibinfo{person}{Tim Roughgarden}.}
  \bibinfo{year}{2021}\natexlab{}.
\newblock \showarticletitle{Transaction Fee Mechanism Design}. In
  \bibinfo{booktitle}{\emph{{EC} '21: The 22nd {ACM} Conference on Economics
  and Computation, Budapest, Hungary, July 18-23, 2021}},
  \bibfield{editor}{\bibinfo{person}{P{\'{e}}ter Bir{\'{o}}},
  \bibinfo{person}{Shuchi Chawla}, {and} \bibinfo{person}{Federico Echenique}}
  (Eds.). \bibinfo{publisher}{{ACM}}, \bibinfo{pages}{792}.
\newblock
\showISBNx{978-1-4503-8554-1}
\urldef\tempurl%
\url{https://doi.org/10.1145/3465456.3467591}
\showDOI{\tempurl}


\bibitem[Yao(2018)]%
        {yao2018incentive}
\bibfield{author}{\bibinfo{person}{Andrew Chi-Chih Yao}.}
  \bibinfo{year}{2018}\natexlab{}.
\newblock \showarticletitle{An incentive analysis of some Bitcoin fee designs}.
\newblock \bibinfo{journal}{\emph{arXiv preprint arXiv:1811.02351}}
  (\bibinfo{year}{2018}).
\newblock


\end{thebibliography}

\appendix

\section{Omitted Proofs}

\ignore{
\begin{proof}[{\bf Proof of Theorem~\ref{th:EIPnotdiverse}}]
	Given the class parameters, we will describe a load $F$ for which EIP-1559 does not implement any policy in  $IP^k_{\{(a_i,\delta_i,\epsilon_i,\pi_i,\rho_i)\}_{i\in[k]}}$. Let $d_2 := \delta_i \cdot d_1 +\epsilon_1$ and $p_2 := (\pi_1-\rho_1)/p_1$. Assume $F$ outputs with probability $1/2$ valuation functions from two urgency classes, $h_1,h_2$. For the first class it holds that $\Pr[ v_0 \geq p_1  | h_1 ] = 0.99$ and that $h_1(d_2-1)=0$ (w.l.o.g, assume that $d_2-1 > d_2$). For the second class, it holds that $\Pr[ v_0 < (p_1+p_2)/2 | h_2 ] = 0.99\cdot a_2/(1-a_2)$ and $\Pr[ v_0 h_2(d_2) \geq p_2  | h_2 ] = 0.99$. 
			Due to our congestion assumption, i.e, that the number of functions output by $F$ is a lot larger that the available throughput $B$, it follows that EIP-1559 will set the cutoff price close to $p_1$. Moreover, on expectation less than 
			$$  0.99\cdot a_2/(1-a_2)/ (0.99\cdot a_2/(1-a_2)+0.99) < a_2$$
			transactions of the second type will make it to the blockchain. Given that transactions of the first type do not meet the second policy clause of any policy in $IP^k_{\{(a_i,\delta_i,\epsilon_i,\pi_i,\rho_i)\}_{i\in[k]}}$, it follows that EIP-1559 will not output enough transactions on expectation to meet the second policy clause, and the theorem follows. 
		\end{proof}
}
	
\begin{proof}[{\bf Proof of Lemma \ref{lemma:continuity-in-p-d}}]
		The proof only shows continuity for adjusting a specific $p_j$, not for the 
		vector $\vecc p$. However, the generalization is straightforward, 
		despite its cumbersome notation.
		
		For a given $\epsilon$, partition the set of possible discount functions into 
		two sets $H^+$ and $H^-$ defined as
		$$
		H^+ = \{h \;|\; \max_h f(v_0 | h) \le M\},
		$$
		picking $M \ge 0$ such that $\prob{H^+} \ge 1 - \epsilon / (6\cdot n)$ we 
		have that
		$$
		\max_{h \in H^+} f(v_0 | h) < M \le \infty
		$$
		If $H^+$ does not exist, this can only be because there are only finitely 
		many 
		options for $h$. In this case, we can define $\delta = \max_{h} f(v_0 | h) < 
		\infty$ without issue, since by our regularity assumption the marginal 
		density of $v_0$ is continuous for every $h$.
		
		Fix a  discount function $h \in H^+$. For this $h$, the 
		derivative with respect to $v_0$ of the  utility provided by every tier is 
		equal 
		to $h(d_j)$. Since $d_1 < d_2 < \ldots < d_k$ and $h$ is strictly decreasing, 
		we have that 
		$h(d_1) > h(d_2) > \ldots > h(d_k)$.  Therefore, we can partition the space 
		$[0,1]$ in $k$ closed-open 
		intervals $I_{j}$ (some of which could be empty) such that if $v_0 \in 
		I_{j}$ 
		then  $v_0 \cdot h(d_j) - p_j$ is maximized for tier $j$.
		
		Suppose that we adjust price $p_j$ by slightly increasing it by 
		$\delta_{p_j}$. 
		This has the effect of `shifting' the line $v_0 \cdot h(d_j) - p_j$ down. 
		Clearly, 
		if 
		$I_j = \emptyset$ then nothing changes. Otherwise, the two intervals 
		$I_{j+1} 
		= [a_{j+1}, b_{j+1}]$ and $I_{j-1} = [a_{j-1}, b_{j-1})$ grow and become 
		$I_{j+1} = [a_{j+1} - \delta_{j+1}, b_{j+1}]$  and $I_{j-1} = [a_{j+1}, b_{j+1} 
		+ 
		\delta_{j+1}]$. The interval $I_j$ by a total of $\delta_{j} = \delta_{j-1} + 
		\delta{j+1}$ 
		shrinks accordingly. This procedure is clearly continuous: the horizontal 
		displacement of the intersection of two lines (in $v_0$) of the form $v_0 
		\cdot h(d_j) - p_j - \delta_{p_j}$ and $v_0 \cdot h(d_{j+1}) - p_j$ is 
		$\delta_{p_j} / 
		(h(d_{j}) - h(d_{j+1})$, which is clearly continuous in $\delta_{p_j}$. Thus, 
		we can select
		can be selected such that $\delta_{j-1}, \delta_{j}, \delta_{j + 1} \le \delta$, 
		for any $\delta > 0$. We note that the intervals affected need not be 
		$I_{j-1}$ 
		and $I_{j+1}$. If these are empty, then the first non-empty interval from a 
		tier above and below is affected instead, but studying $I_{j-1}, I_{j+1}$ is 
		without loss of generality.
		
		We now focus on a specific tier $T_{\ell}$.  Consider two blockchains: the 
		original $(\vecc B, \vecc d, \vecc p)$ and the adjusted  $(\vecc B, \vecc d, 
		\vecc p')$, where $\vecc p'$ is equal $\vecc p$ except for $p'_j = p_j + 
		\delta_{p_j}$.  Given the previous observation, we know that for small 
		enough $\delta$ at most 3 more tiers can be affected, namely $j-1, j$ and 
		$j+1$. We continue the proof for the case $\ell = j$. The remaining two are 
		almost identical. We have that $\prob{t_i \in T_j | h} = 
		\prob{v_0 \in I_j | h}$. If the price of tier $j$ increases by $\delta_{p_j}$, 
		the 
		interval $I_j$ will 
		decrease by at most $\delta$ as well (setting $\delta_{p_j}$ as above). 
		Thus 
		if $I_j = [a_j, b_j)$, for $p_j' = p_j + 
		\delta$ (and $I_j', T_j'$ defined accordingly for the modified blockchain) 
		Continuing and setting $\delta = \epsilon / (3 \cdot M\cdot n)$ we have 
		that:
		\begin{align*}
			\prob{t_i \in T_j' \;|\; h} &= \prob{v_0 \in I_j' | h}\\ 
			&=\prob{v_0 \in I_j' | h} + \prob{v_0 \in I_j \setminus I_j' | h}
			- \prob{v_0 \in I_j \setminus I_j' | h}\\
			&=\prob{v_0 \in I_j | h} - \prob{v_0 \in I_j \setminus I_j' | h}\\
			&=\prob{v_0 \in I_j | h} - \int_{I_j \setminus I_j'} f(v_0 | h)dv_0\\
			&\ge\prob{v_0 \in I_j | h} -  M \cdot \delta\\
			&= \prob{t_i \in T_j | h} - \epsilon / (3 \cdot n),
		\end{align*}
		since we know that $I_j \setminus I_j'$ are two intervals of length at most 
		$\delta$ and $f(v_0 | h) \le M$ for $h \in H^+$. At the same time, we have 
		the straightforward bound
		$$
		\prob{t_i \in T_j' \;|\; h} \le \prob{t_i \in T_j \;|\; h}.
		$$
		
		So far we have shown that for a specific $h \in H^+$, the resulting tier 
		inclusion probabilities are affected in continuous fashion by changing 
		$p_j$. 
		However, the $\delta_{p_j}$ 
		selected might not apply for all $h \in H^+$. We define the following set
		$$
		H(\delta_{p_j}, \delta) = \left\{h \;|\;  \forall j:\;\frac{\delta_{p_j}}{h(d_j) 
			- 
			h(d_{j+1})} < \delta \right\},
		$$
		containing the discount functions for which a change $\delta_{p_j}$ of 
		price 
		$p_j$ displaces all previously mentioned intervals $I_j$ by at most 
		$\delta$. 
		By continuity and strict monotonicity of every $h$, for any $\delta$ we 
		can 
		set $\delta_{p_j}$ small enough such that $\prob{H(\delta_{p_j}, 
		\delta)}$ 
		is a close to 1 as needed.
		
		We set $\delta = \epsilon / (2\cdot M\cdot n)$ and select $\delta_{p_j}$ 
		accordingly, such that $\prob{H(\delta_{p_j}, \delta)} \le \epsilon / 
		(6\cdot 
		n)$. 
		Putting 
		everything together, we have that:
		\begin{align*}
			\expect{T_j'} / n & = \prob{t_i \in T_j'}\\ 
			&= \prob{t_i \in T_j' | h \in H(\delta_{p_j}, \delta) \cap H^+}\cdot 
			\prob{H(\delta_{p_j}, \delta) \cap H^+}\\
			&\qquad+ \prob{t_i 
				\in T_j' | h \neq H(\delta_{p_j}, \delta) \cup H^+}\cdot 
			\prob{(H(\delta_{p_j}, \delta) \cup H^+)^C}\\
			&\le \prob{t_i \in T_j' | h \in H(\delta_{p_j}, \delta) \cup H^+} + 
			\prob{(H(\delta_{p_j}, \delta) \cup H^+)^C}\\
			&\le \prob{t_i \in T_j |  h \in H(\delta_{p_j}, \delta) \cup H^+} + 
			\epsilon/(3\cdot n) + \epsilon /(3\cdot n)\\
			&\le \prob{t_i \in T_j} + \epsilon/(3\cdot n) +
			\epsilon/(3\cdot n) + \epsilon / (3\cdot n)\\
			&\le \expect{T_j} / n  + \epsilon / n.
		\end{align*}
		where in the last step we took an upper bound by assuming that $\prob{t_i 
			\in T_j |  h \notin H(\delta_{p_j}, \delta) \cup H^+} = 1$, thus removing 
			the 
		conditional.
		
		The case for a lower price $p_j$ is very similar, concluding the proof.
	\end{proof}

\begin{proof}[{\bf Proof of Lemma \ref{lemma:technical_existence}}]
	We describe an iterative process that will converge to the desired point.
	\begin{itemize}
		\item Initially, set $\vecc x_0 = 0$.
		\item Then repeat:
		\begin{itemize}
			\item Let $\ell = \argmax_{j \in [k]} \{f_j(x_{t}) - B_j \}.$
			\item Set $\vecc x_{t+1} = \vecc x_{t} + \vecc e_\ell \cdot \delta$ for 
			some $\delta > 
			0$ such that $f_j(\vecc x_{t+1}) = B_j$.
		\end{itemize}
	\end{itemize}
	This process used the first two properties of $f$ to ensure that it is always 
	possible to find $x_{t+1}$ from $x_{t}$. Since for all $j$ the $x_j^t$ keep 
	increasing as $t$ grows,  but can never exceed 1 (by our second property), 
	the procedure converges to $\lim_{t \rightarrow \infty} \vecc x^t = \vecc 
	x^\star$.
	
	Suppose that $x_j^\star > 0$ for all $j \in [k]$. In this case, it is guaranteed 
	that $f_j(\vecc x^\star) \ge B_j$ for all $j$. To see this consider an update 
	$\vecc x_{t+1} = \vecc x_{t} + e_\ell \cdot \delta$ happening at time $t$ 
	where $\vecc x_t > 0$. By our last property, it is clear that $f_j(\vecc 
	x_{t+1}) \ge f_j(\vecc x_{t}) \ge B_j$ for $j \neq \ell$ and $f_\ell(\vecc 
	x_{t+1}) = B_\ell$. Therefore, for any $t' > t$ we have the invariant that 
	$f_j(\vecc x_{t'}) \ge B_j$ for all $j \in [k]$.
	
	We proceed to use a proof by contradiction. Suppose that for some $j \in [k]$, 
	we have that $f_j(\vecc x^\star) - B_j > 0$. Given the convergence to $\vecc 
	x^\star$, we know that for any $\delta > 0$ there exist some $T$ such 
	that  for all $t > T$, $|\vecc x_{t} - \vecc x^\star|_{\infty} < \delta$. In 
	addition, by uniform continuity of $f$  there exists a 
	$\hat \delta$ such that for any $|\vecc x - \vecc y|_{\infty} < \hat \delta$ 
	we have 
	that 
	$|f_j(\vecc x) - f_j(\vecc y)| < f_j(\vecc x^\star) - B_j$.
	
	There are two cases:
	\begin{itemize}
		\item Suppose that for any $T$, there exists $t > T$ such that at time 
		$\vecc x_{t+1} = \vecc x_{t} + a \cdot e_j$ for some $a$. Pick some $t$ 
		large enough such that $|\vecc x_{t+1} - \vecc x^\star|_{\infty} \le \hat 
		\delta$. But then we would have 
		$$|f_j(\vecc x_{t+1}) - f_j(\vecc x^\star)| < |B_j - f_j(\vecc x^\star)|
		\Rightarrow
		|B_j - f_j(\vecc x^\star)| < |B_j - f_j(\vecc x^\star)|,$$ leading to a 
		contradiction. We used the fact that after an update $f_j(\vecc x_{t+1}) = 
		B_j$.
		\item Suppose that there exist a $T$ such that there exists no $t > T$ for 
		which we have an 
		update $\vecc x_{t+1} = \vecc x_{t} + a \cdot e_j$. Since we always select 
		the $\ell$ for which $f_\ell(\vecc x_t) - B_\ell$ is maximized in order to 
		update $x_t$, we would have that $f_\ell(\vecc x_{t}) - B_\ell > f_j(\vecc 
		x_{t}) - B_j$. The proof now continues as the previous case, with the 
		discontinuity arising from reducing $f_\ell(\vecc x_t)$ too suddenly.
	\end{itemize}
	
	To conclude the proof, we need to consider the last case, that for some $j$ we 
	have $x^\star_j = 0$. If this indeed happened, then by our third property we 
	know that there exist some $j$ so that $f_j(\vecc x^\star) > B_j$. This case is 
	almost identical to the ones studied previously, since this would imply  that 
	our process kept performing `large' updates for arbitrarily large $t$.
\end{proof}

\end{document}